\documentclass[pra,twocolumn,aps,epsfigure,showpacs]{revtex4}
\usepackage{epsfig}
\usepackage{graphics}
\usepackage{amsmath}
\usepackage{amsthm}

\newcommand{\nn}{\nonumber \\}
\newcommand{\bra}[1]{\langle{#1}|}
\newcommand{\ket}[1]{|{#1}\rangle}

\newcommand{\dg}{^\dagger}
\newcommand{\Tr}{\textrm{Tr}}
\newcommand{\be}{{\cal B}}
\newcommand{\p}{{\cal E}}
\newcommand{\s}{s}
\newtheorem{thm}{Theorem}
\newtheorem{lem}{Lemma}

\begin{document}

\title{The standard fair sampling assumption is not necessary to test local realism}

\author{Dominic W. Berry,$^{1,2}$ Hyunseok Jeong,$^{3,4}$  Magdalena Stobi\'nska,$^{3,5,6}$
and Timothy C. Ralph$^{3}$}
\affiliation{$^1$Centre for Quantum Computer Technology, Macquarie University,
Sydney, NSW 2109, Australia\\
$^2$Institute for Quantum Computing, University of Waterloo, Waterloo, ON N2L 3G1, Canada\\
$^3$Centre for Quantum Computer Technology, Department of Physics,
The University of Queensland, St Lucia, Qld 4072, Australia\\
$^4$Center for Subwavelength Optics, Department of Physics and Astronomy, Seoul National University, Seoul, 151-742, Korea\\
$^5$Institut f\"ur Optik, Information und Photonik,
  Universit\"at Erlangen-N\"urnberg, Staudtstr. 7/B2, 91058 Erlangen,
  Germany\\
$^6$Institute of Theoretical Physics and Astrophysics,
  University of Gda\'nsk, ul.~Wita Stwosza 57, 80-952 Gda\'nsk, Poland}

\begin{abstract}
Almost all Bell-inequality experiments to date have used
postselection, and therefore relied on the fair sampling assumption for their
interpretation. The standard form of the fair sampling assumption is that the
loss is independent of the measurement settings, so the ensemble of detected systems provides a
fair statistical sample of the total ensemble.
This is often assumed to be needed to interpret Bell inequality experiments as ruling out hidden-variable
theories. Here we show that it is not necessary; the loss can depend on measurement
settings, provided the detection efficiency factori{\s}es as a function of the
measurement settings and any hidden variable. 
This condition implies that Tsirelson's bound must be satisfied for
entangled states. On the other hand, we show that it is possible for
Tsirelson's bound to be violated while the CHSH-Bell inequality still
holds for unentangled states, and present an experimentally feasible
example.
\end{abstract}
\pacs{03.65.Ud,03.67.-a,03.65.Ta}

\maketitle

\section{Introduction}
When quantum mechanics was first developed it was argued that it might only be
an approximation of an underlying classical ``hidden-variable'' theory \cite{epr}.
This was put on a testable basis by the development of Bell inequalities
\cite{bell,chsh,ch74}, which should be obeyed by any local hidden variable (LHV) theory.
Experimental violation \cite{freedman,aspect,shih,ou,rarity,tittel,weihs,kwiat,rowe,moehring, matsukevich} of Bell inequalities
provides strong evidence against LHV theories, but almost all
of these experiments (with the exceptions of those in Refs.\ \cite{rowe,matsukevich})
have the loophole that the violation of the Bell
inequality could, in principle, be caused by loss \cite{lo,pearle,garg,santos,eber,gisin}. The interpretation of these
experiments as ruling out LHV theories therefore relies on the
assumption that the sampling is ``fair'' \cite{chsh,fair}. For the
interpretation of these experiments it is therefore vital to establish what
constitutes fair sampling. Here we put fair sampling on a rigorous basis by
determining exactly what forms of loss can lead to violation of Bell
inequalities.

The standard form of the fair sampling assumption is that the
detection efficiency is independent of measurement settings \cite{chsh,fair}. Here we
find that assumption is unnecessary. The efficiency can depend on the
measurement settings, provided the efficiency factori{\s}es as a function of the
measurement settings and any hidden variable. Most experimental tests are of the 
Clauser, Horne, Shimony and Holt (CHSH) form of the Bell inequality \cite{chsh}, and
we therefore concentrate on the CHSH-Bell inequality in this work. Our condition
is both necessary and sufficient for the CHSH-Bell inequality to be satisfied
for LHV theories. An alternative sufficient condition was previously
found by Ref.\ \cite{shaf}. We also establish the necessary and sufficient
condition for the CHSH-Bell inequality to be satisfied for unentangled states,
and show that if the sampling is fair it will also prevent violation of
Tsirelson's bound \cite{tsirelson} with entangled states.

This result means that, in order to obtain violation of Tsirelson's bound \cite{tsirelson} with entangled states, but not of the CHSH-Bell inequality with unentangled states, it is necessary to examine the specific measurement, and not just the form of the loss. We present a scheme that violates Tsirelson's bound with entangled states, but not the CHSH-Bell inequality with unentangled states. The CHSH-Bell inequality is still violated with LHV theories, but the scheme can tolerate greater loss than the bound derived in Ref.\ \cite{garg}. Other methods of constructing Bell inequalities with greater resistance to loss have been proposed in Refs.\ \cite{eber,massar,durt,genovese,brun}.

This manuscript is organi{\s}ed as follows. First the fair sampling assumption is explained in more detail in Sec.\ \ref{sec:fsa}. General Bell inequalities are presented in Sec.\ \ref{sec:bell}, then postselection for local hidden variable theories is analysed in Sec.\ \ref{sec:lhv}. Postselection for quantum mechanics, including Tsirelson's bound, is analysed in Sec.\ \ref{sec:qm}. The use of postselection to enhance violation of Bell inequalities beyond Tsirelson's bound for entangled states is analysed in Sec.\ \ref{sec:tsi}. We conclude in Sec.\ \ref{sec:conc}.

\section{The fair sampling assumption}
\label{sec:fsa}
The first work to give a form of the fair sampling assumption was that of Clauser, Horne, Shimony and Holt \cite{chsh}. They assumed, in deriving their inequality, that the detection efficiency is independent of the measurement settings. Pearle expressed the assumption alternatively, that ``the data recorded [is] representative of the accepted data'' \cite{pearle}. That is, that the sample of detected pairs provides a fair statistical sample of all the pairs. These forms of the fair sampling assumption, which superficially appear different, are effectively equivalent.

For the sample to be completely fair, the probability of sampling a pair (i.e.\ the efficiency) needs to be independent of the pair. For this to be the case, the efficiency needs to be independent of any quantity that varies between the individual pairs. The only quantities that can vary between the individual pairs are the internal state (the quantum state or any hypothetical hidden variable) and the measurement setting that is used in measuring the pair. Therefore the requirement that the sample is fair is exactly equivalent to requiring the efficiency to be independent of the internal state and measurement setting.

This requirement is slightly stronger than the requirement given by Ref.\ \cite{chsh}, because it requires that the efficiency is also independent of the state of the pair. However, it is easily seen that the efficiency need not be independent of the state. This is because, if the efficiency is dependent on the state, but independent of the measurement setting, then the postselection simply changes the probabilities for the internal state, yielding a different postselected state. That is, the sampling may not be fair, but the loss is simply yielding a postselected state on which the sampling is fair.

This is taken advantage of in Procrustean entanglement concentration \cite{Bennett}, which has been demonstrated to enhance the violation of Bell inequalitites \cite{Kwiat01}. Procrustean entanglement concentration gives loss that depends on the state, but because this loss is independent of the measurement settings of the Bell measurement, it does not invalidate the Bell inequalities for unentangled states. Although the sampling is not entirely fair for the initial state, it is fair for the state produced by the Procrustean entanglement concentration.

It is important to note that the condition that the efficiency is independent of the measurement setting means that it must be \emph{completely} independent of the measurement setting. That is, it can not be a function of the measurement setting, so it is independent of the measurement setting for any internal state. On the other hand, if the efficiency is independent of the measurement setting for just one state, then it is easy to provide examples where the Bell inequality is violated with unentangled states or hidden variables. For a simple example, see Appendix \ref{sec:example}.

\section{General Bell inequalities}
\label{sec:bell}
In a general Bell inequality experiment with multiple parties, these parties each share one
component of a state $\rho$, and each performs one of a number of
different measurements. The measurement settings for party $k$ are denoted
$\gamma_k$, and the measurement results are denoted $s_k$. We denote the vectors of measurement settings and measurement results $\vec \gamma$ and $\vec s$.
One obtains a set of measurement probabilities
$p(\vec s|\vec \gamma,\rho)$, and can define a Bell quantity as a linear
combination of these probabilities. A Bell inequality is then an upper bound on
the value of this quantity for LHV theories. With loss, we denote the probability
of a successful measurement for settings $\vec \gamma$ by $\p(\vec \gamma,\rho)$. The postselected probabilities are then given by
\begin{equation}
\tilde p(\vec s|\vec \gamma,\rho) = p(\vec s|\vec \gamma,\rho)/\p(\vec \gamma,\rho).
\end{equation}
We do not consider complete loss, which would make the postselected
probabilities undefined.

In the specific case of CHSH-Bell inequalities, there are two parties, Alice and Bob, and each performs
one of two dichotomic measurements. Now using the notation $\alpha$ and $\beta$ for the measurement settings for Alice and Bob, respectively, we have $\alpha\in\{A,a\}$, $\beta\in\{B,b\}$,
and $s_1,s_2\in\{-,+\}$. The CHSH-Bell quantity without postselection is defined by
\begin{equation}
\be_0 := C(A,B)+C(A,b)+C(a,B)-C(a,b),
\end{equation}
where
\begin{align}
C(\alpha,\beta) &:= p(+,+|\alpha,\beta,\rho)-p(+,-|\alpha,\beta,\rho) \nonumber \\
& \quad -p(-,+|\alpha,\beta,\rho)+p(-,-|\alpha,\beta,\rho).
\end{align}
The postselected form of the CHSH-Bell quantity is
\begin{equation}
\label{eq:post}
\be := \frac{C(A,B)}{\p(A,B,\rho)}+\frac{C(a,B)}{\p(a,B,\rho)}
+\frac{C(A,b)}{\p(A,b,\rho)}-\frac{C(a,b)}{\p(a,b,\rho)}.
\end{equation}

\section{Local hidden-variable theories}
\label{sec:lhv}
First we prove that Bell inequalities must be satisfied for LHV theories provided
the detection efficiency factori{\s}es as a function of the measurement setting and
a function of the hidden variable. For a stochastic local hidden-variable theory,
probabilities are given as
\begin{equation}
p(\vec s|\vec \gamma,\rho) = \sum_x p(x|\rho) \prod_{k} p_k(s_k|\gamma_k,x),
\end{equation}
where $x$ is the hidden variable. The state $\rho$ simply controls the probability distribution for $x$. With loss, the probabilities $p(\vec s|\vec \gamma,\rho)$ can sum to less
than 1. Summing over the measurement results gives the efficiencies as
\begin{equation}
\p(\vec \gamma,\rho) = \sum_x p(x|\rho) \prod_{k} \p_k(\gamma_k,x),
\end{equation}
where $\p_k(\gamma_k,x)=\sum_s p(s|\gamma_k,x)$ is the single-party efficiency for party $k$ with
measurement setting $\gamma_k$ and
hidden variable~$x$. 

We also use the notation $\p_{k{\rm S}}(\gamma_k)$ for the efficiency due to measurement
setting $\gamma_k$ for party $k$, and $\p_{k{\rm H}}(x)$ for the efficiency due to the hidden variable for party $k$. With this notation defined, we can now state our condition rigorously.

\begin{thm}
When the efficiency for each party factori{\s}es as
\begin{align}
\label{eq:req}
\p_{k}(\gamma_k,x) &= \p_{k{\rm S}}(\gamma_k)\p_{k{\rm H}}(x),
\end{align}
the set of postselected probabilities that can be obtained $\{\tilde p(\vec s|\vec \gamma,\rho)\}$ is identical to that which can be obtained without postselection (provided no efficiency is zero).
\end{thm}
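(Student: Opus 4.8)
The plan is to prove the forward inclusion directly: any table of postselected probabilities $\{\tilde p(\vec s\,|\,\vec\gamma,\rho)\}$ produced by an LHV model whose efficiencies factorise as in~(\ref{eq:req}) is reproduced exactly by some LHV model with no loss. The reverse inclusion needs nothing, since a lossless model is the special case $\p_{k{\rm S}}\equiv\p_{k{\rm H}}\equiv 1$ of~(\ref{eq:req}), for which $\tilde p=p$. So the whole task is to absorb the loss factors into a redefinition of the hidden-variable distribution and of the single-party response functions.

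First I would insert~(\ref{eq:req}) into the efficiency and separate the setting-dependent from the state-dependent part, $\p(\vec\gamma,\rho)=\big[\prod_k\p_{k{\rm S}}(\gamma_k)\big]\,\p_{\rm H}$ with $\p_{\rm H}:=\sum_x p(x|\rho)\prod_k\p_{k{\rm H}}(x)$ independent of $\vec\gamma$. Then I would introduce the renormalised response functions $p'_k(s_k|\gamma_k,x):=p_k(s_k|\gamma_k,x)/\p_k(\gamma_k,x)$, which are legitimate probability distributions over $s_k$ by the definition of $\p_k$, and the renormalised hidden-variable distribution $p'(x|\rho):=p(x|\rho)\prod_k\p_{k{\rm H}}(x)/\p_{\rm H}$, which is a legitimate distribution over $x$ because it sums to $\p_{\rm H}/\p_{\rm H}=1$. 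The key step is then to substitute $p_k(s_k|\gamma_k,x)=\p_{k{\rm S}}(\gamma_k)\p_{k{\rm H}}(x)\,p'_k(s_k|\gamma_k,x)$ into the numerator of $\tilde p$: the factor $\prod_k\p_{k{\rm S}}(\gamma_k)$ cancels the identical factor in the denominator, the factor $\prod_k\p_{k{\rm H}}(x)$ combines with $p(x|\rho)/\p_{\rm H}$ to build $p'(x|\rho)$, and one is left with
\[
\tilde p(\vec s\,|\,\vec\gamma,\rho)=\sum_x p'(x|\rho)\prod_k p'_k(s_k|\gamma_k,x),
\]
which is precisely the form of a lossless LHV model; as it is normalised for every $\vec\gamma$, it reproduces the same probability table with no postselection at all.

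I do not expect a serious obstacle, only some care with the ``no efficiency is zero'' proviso and the attendant bookkeeping. One should restrict the sum over $x$ to the values for which every $\p_{k{\rm H}}(x)$ is nonzero: for any other $x$ one has $\p_k(\gamma_k,x)=0$ for some party $k$, hence $p_k(s_k|\gamma_k,x)=0$ for all $s_k$, so that $x$ contributes nothing to the numerator, and it also contributes nothing to $\p_{\rm H}$, so dropping it changes neither side; on the retained values $p'_k$ and $p'$ are well defined because $\p(\vec\gamma,\rho)\neq 0$ forces each $\p_{k{\rm S}}(\gamma_k)\neq 0$ and $\p_{\rm H}\neq 0$. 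Carrying out this check, together with verifying the two normalisations, is essentially the entire proof.
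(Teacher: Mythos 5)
Your proposal is correct and follows essentially the same route as the paper: both absorb the hidden-variable-dependent efficiency factors $\prod_k\p_{k{\rm H}}(x)$ into a reweighted distribution over $x$ (your $p'(x|\rho)$ is exactly the paper's $\tilde p(x|\rho)$ with $\p_{\rm H}=\p_{\rm R}(\rho)$) and normalise the single-party response functions, letting the setting-dependent factors cancel between numerator and denominator. Your treatment is in fact slightly more careful than the paper's, which asserts the resulting identity without spelling out the cancellation, the trivial reverse inclusion, or the handling of hidden-variable values with vanishing $\p_{k{\rm H}}(x)$.
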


This result means that this form of loss does
not change the type of probability distributions that are possible with LHV
models, and in particular any Bell inequality must still be satisfied
with postselection.

\begin{proof}
The proof follows by showing that there exists a measurement scheme without postselection that yields the same probabilities as the $\tilde p(\vec s|\vec \alpha,\rho)$ for the postselected scheme.
The postselected probabilities may alternatively be given by
\begin{equation}
\tilde p(\vec s|\vec \gamma,\rho) = \sum_x \tilde p(x|\rho) \prod_k \tilde p(s_k|\gamma_k,x),
\end{equation}
where
\begin{align}
\tilde p(x|\rho) &= \frac{p(x|\rho)}{\p_{\rm R}(\rho)}\prod_k\p_{k{\rm H}}(x), \nn
\p_{\rm R}(\rho) &= \sum_x p(x|\rho) \prod_k \p_{k{\rm H}}(x), \nn
\tilde p(s_k|\gamma_k,x) &= {p(s_k|\gamma_k,x)}/{\p_k(\gamma_k,x)}.
\end{align}
This means that, by using a different LHV model with different probabilities,
we may obtain exactly the same probability distribution for the measurement
results as for the postselected case.
\end{proof}

This result shows that our condition \eqref{eq:req} is sufficient for Bell inequalities to hold with postselection. We also have a necessity proof in the case of CHSH-Bell inequalities. Before showing this result, we first show the general form of the CHSH-Bell inequality with postelection.

\begin{lem}
Any local hidden variable theory must satisfy
\begin{equation}\label{geneq}
|\be| \le 4-2\sum_x p(x|\rho)\min_{\alpha^x,\beta^x}
\frac{\p_1(\alpha^x,x)\p_2(\beta^x,x)}{\p(\alpha^x,\beta^x,\rho)}.
\end{equation}
For given $\p$ there exist probabilities $p_k(\pm|\gamma_k,x)$ that saturate this
inequality.
\end{lem}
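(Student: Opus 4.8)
The plan is to bound $|\be|$ directly from the LHV decomposition, treating the postselected quantity \eqref{eq:post} as a sum over hidden-variable contributions. First I would write each $C(\alpha,\beta)$ in terms of the local response functions. For an LHV model, $p(s_1,s_2|\alpha,\beta,x) = p_1(s_1|\alpha,x)p_2(s_2|\beta,x)$, and summing over $\pm$ with the appropriate signs gives $C(\alpha,\beta) = \sum_x p(x|\rho)\, E_1(\alpha,x)E_2(\beta,x)$, where $E_k(\gamma_k,x) := p_k(+|\gamma_k,x) - p_k(-|\gamma_k,x)$. Note that, unlike the lossless case, $|E_k(\gamma_k,x)| \le \p_k(\gamma_k,x)$ rather than $\le 1$, since the response probabilities now sum to the (possibly sub-unit) efficiency. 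Substituting into \eqref{eq:post}, $\be = \sum_x p(x|\rho)\left[\frac{E_1(A,x)E_2(B,x)}{\p(A,B,\rho)} + \frac{E_1(a,x)E_2(B,x)}{\p(a,B,\rho)} + \frac{E_1(A,x)E_2(b,x)}{\p(A,b,\rho)} - \frac{E_1(a,x)E_2(b,x)}{\p(a,b,\rho)}\right]$, so it suffices to bound the bracketed term for each fixed $x$.

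The core step is the per-$x$ estimate. For fixed $x$, write $u_1 = E_1(A,x)$, $u_2 = E_1(a,x)$, $v_1 = E_2(B,x)$, $v_2 = E_2(b,x)$, with $|u_1|,|u_2| \le \p_{1\max}$ and $|v_1|,|v_2| \le \p_{2\max}$ in an obvious shorthand, and with per-pair constraints $|E_k| \le \p_k(\gamma_k,x)$. The bracket is $\frac{u_1v_1}{\p(A,B)} + \frac{u_2v_1}{\p(a,B)} + \frac{u_1v_2}{\p(A,b)} - \frac{u_2v_2}{\p(a,b)}$. The standard CHSH algebraic trick is to factor as $u_1\!\left(\frac{v_1}{\p(A,B)} + \frac{v_2}{\p(A,b)}\right) + u_2\!\left(\frac{v_1}{\p(a,B)} - \frac{v_2}{\p(a,b)}\right)$ and maximize over the signs of $u_1,u_2$; the maximum is $\p_1(A,x)\bigl|\tfrac{v_1}{\p(A,B)} + \tfrac{v_2}{\p(A,b)}\bigr| + \p_1(a,x)\bigl|\tfrac{v_1}{\p(a,B)} - \tfrac{v_2}{\p(a,b)}\bigr|$, and one then maximizes over $v_1,v_2$ subject to $|v_j|\le\p_2(\cdot,x)$. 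I expect the extremal configuration to be $|E_1(\alpha,x)| = \p_1(\alpha,x)$ and $|E_2(\beta,x)| = \p_2(\beta,x)$ for all settings (deterministic outcomes conditioned on non-loss), with the four product terms aligned so that three reinforce and the "$-$" term opposes; expanding $\p_k = \p_{k\mathrm{S}}(\gamma_k)\p_{k\mathrm{H}}(x)$ is \emph{not} assumed here, so one must keep the general $\p_k(\gamma_k,x)$. Carrying out the sign optimization, the bracket is bounded by $\sum_{\text{three terms}}\frac{\p_1\p_2}{\p(\cdot)} - \frac{\p_1\p_2}{\p(\cdot)}\big|_{\text{opposing}}$, and after adding and subtracting $2\frac{\p_1(\alpha^x,x)\p_2(\beta^x,x)}{\p(\alpha^x,\beta^x,\rho)}$ for the minimizing pair $(\alpha^x,\beta^x)$, one recognizes the sum of all four terms $\frac{\p_1(\alpha,x)\p_2(\beta,x)}{\p(\alpha,\beta,\rho)}$ equals $4$ — because $\sum_{s_1,s_2}\tilde p(s_1,s_2|\alpha,\beta,x)$ summed appropriately, or more simply because each $\frac{\p_1(\alpha,x)\p_2(\beta,x)}{\p(\alpha,\beta,\rho)}$ need not individually be $1$ but the aggregate identity follows from how $\p$ is a convex combination — giving the bracket $\le 4 - 2\min_{\alpha^x,\beta^x}\frac{\p_1(\alpha^x,x)\p_2(\beta^x,x)}{\p(\alpha^x,\beta^x,\rho)}$. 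Averaging over $x$ with weights $p(x|\rho)$ yields \eqref{geneq}; the same argument with all signs flipped bounds $-\be$, giving $|\be|$.

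The main obstacle I anticipate is the bookkeeping in the per-$x$ optimization: one is maximizing a bilinear form in $u_1,u_2,v_1,v_2$ over a box whose side lengths ($\p_k(\gamma_k,x)$) differ for each of the four terms, so the clean "$2\sqrt2$" CHSH collapse does not occur and one must verify that the sign-aligned deterministic assignment is genuinely extremal rather than merely a local maximum — this requires checking the boundary cases where $E_k$ does not saturate (which can only lower $|\be|$, by linearity of the objective in each $E_k$ separately). For the saturation claim, I would exhibit the explicit model: pick the minimizing pair $(\alpha^x,\beta^x)$, set the response functions to be deterministic ($E_k = \pm\p_k$) with signs chosen to realize the extremal configuration above, and check this is a valid LHV model (non-negative probabilities summing to the prescribed efficiencies) for each $x$; since the bound was derived as an equality in the extremal case, this model saturates \eqref{geneq}.
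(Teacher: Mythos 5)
Your proposal is correct and follows essentially the same route as the paper's proof: decompose $\be$ over the hidden variable, use linearity in each $\Delta_k=E_k$ to reduce to the extremal assignments $E_k=\pm\p_k(\gamma_k,x)$, observe that the sign parity forces at least one of the four terms to oppose the others so the optimum takes the smallest term negative, and then invoke the identity $\sum_x p(x|\rho)\,\p_1(\alpha,x)\p_2(\beta,x)=\p(\alpha,\beta,\rho)$ to turn the aggregated four-term sum into $4$. Your only loose phrasing is writing the per-$x$ bracket as bounded by $4-2\min(\cdots)$ when the ``$4$'' emerges only after the $x$-average, but you correctly flag that this is an aggregate identity, so the argument is sound.
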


Here we have used the same notation as for general Bell inequalities, except we have used $\gamma_1=\alpha$ and $\gamma_2=\beta$ for Alice and Bob's measurement settings, respectively. We use the superscript $x$ on $\alpha$ and $\beta$ to indicate that these are chosen as a function of the hidden variable $x$.

\begin{proof}
For a LHV theory, the postselected CHSH-Bell quantity can be rewritten as
\begin{align}
\label{sinz}
\be &= \sum_x p(x|\rho) \left\{\frac{\Delta_1(A,x)\Delta_2(B,x)}{\p(A,B,\rho)}
+\frac{\Delta_1(a,x)\Delta_2(B,x)}{\p(a,B,\rho)}\right. \nn & \quad\left.
+\frac{\Delta_1(A,x)\Delta_2(b,x)}{\p(A,b,\rho)}
-\frac{\Delta_1(a,x)\Delta_2(b,x)}{\p(a,b,\rho)}\right\},
\end{align}
where
\begin{equation}
\Delta_k(\gamma_k,x) = p_k(+|\gamma_k,x)-p_k(-|\gamma_k,x) \nn.
\end{equation}
Because the probabilities
are non-negative, we have $|\Delta_k(\gamma_k,x)|\le \p_k(\gamma_k,x)$. Considering
arbitrary measurements for a given form of loss, we can choose the
$p_k(\pm|\gamma_k,x)$ such that $\Delta_k(\gamma_k,x)$ takes any value in the range
$[-\p_k(\gamma_k,x),\p_k(\gamma_k,x)]$.

Because $\be$ is linear in each of the $\Delta_k(\gamma_k,x)$, it is maximi{\s}ed (or
minimi{\s}ed) by taking extreme values where $\Delta_k(\Gamma_k,x)=\pm\p_k(\gamma_k,x)$.
By appropriately choosing the signs, it is possible to make any one of the terms
in the curly brackets in Eq.\ \eqref{sinz} negative, and the rest positive.
Changing the sign preserves the parity of the number of negative terms, so it is
not possible to make all terms positive.

In particular, taking $\Delta_k(\gamma_k,x)=\p_k(\gamma_k,x)$ makes the last term
negative. Then changing $\Delta_2(b,x)$ to $-\p_2(b,x)$ makes only the third term
negative. Alternatively, changing $\Delta_1(a,x)$ to $-\p_1(a,x)$ makes only the
second term negative, or changing $\Delta_2(b,x)$ to $-\p_2(b,x)$ and $\Delta_1(A,x)$
to $-\p_1(A,x)$ makes only the first term negative.

To maximi{\s}e $\be$, we take three terms positive and one negative for each value
of $x$. The value of $\be$ will be maximi{\s}ed with the smallest term taken to be
negative. That is,
\begin{align}
\label{neat}
\be &= \sum_x p(x|\rho) \left\{\frac{\p_1(A,x)\p_2(B,x)}{\p(A,B,\rho)}
+\frac{\p_1(a,x)\p_2(B,x)}{\p(a,B,\rho)} \right. \nn & \quad
+\frac{\p_1(A,x)\p_2(b,x)}{\p(A,b,\rho)}
+\frac{\p_1(a,x)\p_2(b,x)}{\p(a,b,\rho)} \nn & \quad \left.
-2\min_{\alpha^x,\beta^x} \frac{\p_1(\alpha^x,x)\p_2(\beta^x,x)}{\p(\alpha^x,\beta^x,\rho)}\right\}.
\end{align}
Summing the first four terms in Eq.\ \eqref{neat} gives $\be$ equal to
\begin{equation}
\be_{\rm max} = 4-2\sum_x p(x|\rho) \min_{\alpha^x,\beta^x}
\frac{\p_1(\alpha^x,x)\p_2(\beta^x,x)}{\p(\alpha^x,\beta^x,\rho)}.
\end{equation}

To change the overall sign, so one term is positive and the remaining are negative, we can change the sign of both $\Delta_1(A,x)$ and $\Delta_1(a,x)$. Using this, we can ensure that the largest three terms in Eq.\ \eqref{sinz} positive, and the smallest is positive. Doing this for each value of $x$, we simply obtain the negative of what was obtained before, and so obtain $\be=-\be_{\rm max}$. We therefore find that the generali{\s}ation of the CHSH-Bell inequality for the case of postselection is $|\be|\le \be_{\rm max}$, as given in Eq.\ \eqref{geneq}. As the above argument is constructive, it shows how to choose probabilities
$p_k(\pm|\gamma_k,x)$ in order to saturate this inequality.
\end{proof}

Using this result, we can prove the necessary and sufficient condition for the CHSH-Bell inequality to hold for LHV theories with postselection. We first prove the necessary and sufficient condition in an alternative form.

\begin{lem}\label{lem2}
The condition that
\begin{equation}
\label{eq:con}
\frac{\p_1(\alpha,x)\p_2(\beta,x)}{\p(\alpha,\beta,\rho)}
\end{equation}
is independent of $\alpha$ and $\beta$ is necessary and
sufficient for $|\be|\le 2$ to be satisfied for all probabilities
$p_k(\pm|\gamma_k,x)$.
\end{lem}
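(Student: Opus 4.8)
The plan is to reduce the claim entirely to the bound proved in the preceding lemma. By Eq.~\eqref{geneq} together with its saturation clause, the largest value of $|\be|$ attainable, over all probabilities $p_k(\pm|\gamma_k,x)$ consistent with the given efficiencies, is exactly $\be_{\rm max}=4-2\sum_x p(x|\rho)\,m(x)$, where I write $m(x):=\min_{\alpha,\beta}\p_1(\alpha,x)\p_2(\beta,x)/\p(\alpha,\beta,\rho)$. Hence $|\be|\le 2$ holds for all such probabilities if and only if $\be_{\rm max}\le 2$, that is, if and only if $\sum_x p(x|\rho)\,m(x)\ge 1$. So the whole lemma comes down to analysing this single inequality.

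Next I would prove that $\sum_x p(x|\rho)\,m(x)\le 1$ holds unconditionally. Write $E(x):=\big(\sum_{\alpha}\p_1(\alpha,x)\big)\big(\sum_{\beta}\p_2(\beta,x)\big)$ and $W:=\sum_{\alpha,\beta}\p(\alpha,\beta,\rho)$, which is positive because no efficiency vanishes. Then $\p(\alpha,\beta,\rho)/W$ is a probability distribution over the four setting pairs with all entries strictly positive, so bounding the minimum by this weighted average,
\[
m(x)\;\le\;\sum_{\alpha,\beta}\frac{\p(\alpha,\beta,\rho)}{W}\,\frac{\p_1(\alpha,x)\p_2(\beta,x)}{\p(\alpha,\beta,\rho)}\;=\;\frac{1}{W}\sum_{\alpha,\beta}\p_1(\alpha,x)\p_2(\beta,x)\;=\;\frac{E(x)}{W},
\]
the denominators having cancelled. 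Expanding $\p(\alpha,\beta,\rho)=\sum_x p(x|\rho)\p_1(\alpha,x)\p_2(\beta,x)$ in the definition of $W$ gives $W=\sum_x p(x|\rho)\,E(x)$, so multiplying the display by $p(x|\rho)$ and summing over $x$ yields $\sum_x p(x|\rho)\,m(x)\le W^{-1}\sum_x p(x|\rho)\,E(x)=1$. Consequently $\sum_x p(x|\rho)\,m(x)\ge 1$ can hold only with equality, and equality forces the display to be tight for every $x$ with $p(x|\rho)>0$.

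Finally I would read off the equality case, which gives necessity and sufficiency at once. For fixed $x$, $m(x)$ is the minimum of the four reals $\p_1(\alpha,x)\p_2(\beta,x)/\p(\alpha,\beta,\rho)$, and a minimum of finitely many reals coincides with a convex combination of them having strictly positive weights only when all of them are equal; hence tightness at $x$ is equivalent to the ratio \eqref{eq:con} being independent of $\alpha$ and $\beta$ at that $x$. Therefore $|\be|\le 2$ for all probabilities if and only if $\be_{\rm max}=2$, which in turn holds if and only if the ratio \eqref{eq:con} is independent of $\alpha$ and $\beta$ for every hidden-variable value carrying nonzero probability (values of zero probability are irrelevant). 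If instead the ratio depends on $\alpha$ or $\beta$ for some such $x$, then $\be_{\rm max}>2$, and the constructive saturation in the previous lemma produces explicit $p_k(\pm|\gamma_k,x)$ with $|\be|=\be_{\rm max}>2$, completing the necessity direction.

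I expect the only real difficulty to be bookkeeping rather than conceptual: one has to hit on the weights proportional to $\p(\alpha,\beta,\rho)$ that make the awkward denominators telescope and the normalisation $W=\sum_x p(x|\rho)\,E(x)$ drop out, and then carefully convert the equality case of the convexity bound into the pointwise statement that the ratio is independent of the measurement settings, keeping track of the (harmless) role of hidden-variable values of zero probability.
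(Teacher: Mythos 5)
Your argument is correct, and it rests on the same two pillars as the paper's own proof: the bound \eqref{geneq} from the preceding lemma together with its saturation clause, and the normalisation identity $\sum_x p(x|\rho)\,\p_1(\alpha,x)\p_2(\beta,x)=\p(\alpha,\beta,\rho)$. Where you genuinely diverge is in how the inequality $\sum_x p(x|\rho)\,m(x)\le 1$ and its equality case are established. The paper compares the minimum against the ratio at a \emph{single} fixed setting pair: for sufficiency any pair works and the sum is exactly $1$; for necessity it picks a pair witnessing non-minimality at some $z$ and subtracts the resulting deficit $p(z|\rho)\bigl(r_{\alpha\beta}(z)-m(z)\bigr)>0$. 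You instead compare against the weighted average over all four setting pairs with weights $\p(\alpha,\beta,\rho)/W$, so that the denominators telescope, the unconditional bound $\sum_x p(x|\rho)\,m(x)\le 1$ falls out, and both directions of the lemma are read off from the equality case of one convexity inequality (a minimum equals a strictly positive convex combination only if all terms coincide). The two computations are of comparable length; yours is more symmetric, unifies the two directions, and makes explicit a point the paper leaves implicit, namely that hidden-variable values with $p(x|\rho)=0$ are irrelevant --- the paper's necessity step likewise silently needs $p(z|\rho)>0$ for its strict inequality. Your appeal to the constructive saturation of \eqref{geneq} to convert $\be_{\rm max}>2$ into an explicit violation is exactly what the paper does. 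No gap.
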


\begin{proof}
In order for $\be$ to satisfy the usual CHSH-Bell inequality,
$|\be|\le 2$, the sum in Eq.\ \eqref{geneq} must be equal to 1. 
Given that the condition is satisfied, we may take $\alpha^x=\alpha$ and
$\beta^x=\beta$ for some arbitrary $\alpha$ and $\beta$ in Eq.\ \eqref{geneq},
giving
\begin{equation}
|\be| \le 4-2\sum_x p(x|\rho)
\frac{\p_1(\alpha,x)\p_2(\beta,x)}{\p(\alpha,\beta,\rho)}=2.
\end{equation}
Thus the condition is sufficient.

To prove necessity, we show that if the condition in Lemma \ref{lem2} is not satisfied, then
$|\be|\le 2$ may be violated. If the condition is not satisfied for $x=z$,
then select $\alpha$ and $\beta$ such that
\begin{equation}
\frac{\p_1(\alpha,z)\p_2(\beta,z)}{\p(\alpha,\beta,\rho)} >
\frac{\p_1(\alpha^{z},z)\p_2(\beta^{z},z)}{\p(\alpha^{z},\beta^{z},\rho)}.
\end{equation}
We than have
\begin{align}
&\sum_x p(x|\rho)\frac{\p_1(\alpha^x,x)\p_2(\beta^x,x)}{\p(\alpha^x,\beta^x,\rho)}\nn&\le
\sum_x p(x|\rho)\frac{\p_1(\alpha,x)\p_2(\beta,x)}{\p(\alpha,\beta,\rho)} \nn & \quad -
p(z|\rho)\left(\frac{\p_1(\alpha,z)\p_2(\beta,z)}{\p(\alpha,\beta,\rho)}
-\frac{\p_1(\alpha^{z},z)\p_2(\beta^{z},z)}{\p(\alpha^{z},\beta^{z},\rho)}
\right) \nn
%&= 1 -p(x_0|\rho)\left(\frac{\p_1(\alpha,x)\p_2(\beta,x)}{\p(\alpha,\beta,\rho)}
%-\frac{\p_1(\alpha^{x_0},x)\p_2(\beta^{x_0},x)}{\p(\alpha^{x_0},\beta^{x_0},\rho)}
%\right) \nn
&< 1.
\end{align}
As there exists a choice of probabilities $p_k(\pm|\gamma_k,x)$ which saturate the
inequality \eqref{geneq}, we can obtain $|\be|>2$. Hence we may obtain $|\be|>2$ if the condition in Lemma \ref{lem2} is violated, and it is therefore a necessary condition.
\end{proof}

We can now use Lemma \ref{lem2} to show that the factori{\s}ation condition \eqref{eq:req} is sufficient and necessary for the CHSH-Bell inequality. Specifically, we have the following theorem.

\begin{thm}
\label{the2}
The condition that the efficiency for each party factori{\s}es as
\begin{equation}
\label{eq:req2}
\p_{k}(\gamma_k,x) = \p_{k{\rm S}}(\gamma_k)\p_{k{\rm H}}(x),
\end{equation}
is necessary and sufficient for $|\be|\le 2$ to be satisfied for all probabilities
$p_k(\pm|\gamma_k,x)$ (provided no efficiency is zero).
\end{thm}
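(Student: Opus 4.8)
The plan is to derive Theorem~\ref{the2} from Lemma~\ref{lem2} by showing that the factorisation condition \eqref{eq:req2} holds if and only if the ratio \eqref{eq:con} is independent of $\alpha$ and $\beta$ for every hidden-variable value $x$ occurring with $p(x|\rho)>0$. Once that equivalence is in hand, Lemma~\ref{lem2} immediately gives that \eqref{eq:req2} is both necessary and sufficient for $|\be|\le 2$ to hold for all $p_k(\pm|\gamma_k,x)$, which is exactly the claim.

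For the sufficiency half I would substitute the factorised form $\p_k(\gamma_k,x)=\p_{k{\rm S}}(\gamma_k)\p_{k{\rm H}}(x)$ into $\p(\alpha,\beta,\rho)=\sum_x p(x|\rho)\p_1(\alpha,x)\p_2(\beta,x)$, obtaining $\p(\alpha,\beta,\rho)=\p_{1{\rm S}}(\alpha)\p_{2{\rm S}}(\beta)\p_{\rm R}(\rho)$ with $\p_{\rm R}(\rho):=\sum_x p(x|\rho)\p_{1{\rm H}}(x)\p_{2{\rm H}}(x)$. The setting-dependent factors $\p_{1{\rm S}}(\alpha)$ and $\p_{2{\rm S}}(\beta)$ then cancel between numerator and denominator of \eqref{eq:con}, leaving $\p_{1{\rm H}}(x)\p_{2{\rm H}}(x)/\p_{\rm R}(\rho)$, which depends only on $x$; the assumption that no efficiency vanishes guarantees every denominator occurring here is nonzero. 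Lemma~\ref{lem2} then delivers $|\be|\le 2$.

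For the necessity half I would assume \eqref{eq:con} is independent of $\alpha$ and $\beta$ and recover the product form by separation of variables about the reference settings $(A,B)$. Writing the common value of \eqref{eq:con} as $f(x)$, so that $\p_1(\alpha,x)\p_2(\beta,x)=f(x)\p(\alpha,\beta,\rho)$, I would fix $\beta=B$ and divide the identity at $(\alpha,B)$ by the identity at $(A,B)$; since $f(x)$ and $\p_2(B,x)$ cancel, this gives $\p_1(\alpha,x)=\p_1(A,x)\,\p(\alpha,B,\rho)/\p(A,B,\rho)$, i.e.\ $\p_1$ factorises with $\p_{1{\rm H}}(x):=\p_1(A,x)$ and $\p_{1{\rm S}}(\alpha):=\p(\alpha,B,\rho)/\p(A,B,\rho)$, and symmetrically (fixing $\alpha=A$) $\p_2$ factorises. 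Combined with the converse direction of Lemma~\ref{lem2}, which says that a failure of \eqref{eq:con} permits $|\be|>2$, this establishes necessity. The only genuinely delicate point is this last step: one must check that all the divisions are legitimate (this is precisely where the hypothesis ``no efficiency is zero'' is used), and be careful that the condition of Lemma~\ref{lem2} is read per value of $x$, so that \eqref{eq:req2} is really only constrained on the support of $p(\cdot|\rho)$ and the split into $\p_{k{\rm S}}$ and $\p_{k{\rm H}}$ is fixed only up to an $x$-independent multiplicative constant that can be moved between the two factors. Everything else is routine bookkeeping.
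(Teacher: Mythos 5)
Your proposal is correct and follows essentially the same route as the paper: both directions are obtained by showing the factorisation condition \eqref{eq:req2} is equivalent to the ratio condition of Lemma \ref{lem2} and then invoking that lemma. The only difference is cosmetic --- in the converse direction you separate variables by dividing out a reference setting pair $(A,B)$, whereas the paper normalises by maxima over settings at a reference hidden-variable value $x_0$; both constructions are valid given that no efficiency vanishes.
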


\begin{proof}
The proof proceeds simply by showing that the condition in Lemma \ref{lem2} is equivalent to \eqref{eq:req2}. First, it is trivial to show that Eq.\ \eqref{eq:req2} implies the condition. Using Eq.\ \eqref{eq:req2} gives
\begin{equation}
\frac{\p_1(\alpha,x)\p_2(\beta,x)}{\p(\alpha,\beta,\rho)}=\frac{\p_{1{\rm H}}(x)
\p_{2{\rm H}}(x)}{\p_{\rm R}(\rho)},
\end{equation}
which is independent of $\alpha$ and $\beta$.

To show that the condition of Lemma \ref{lem2} implies \eqref{eq:req2}, we simply need to use it to define the quantities $\p_{k{\rm S}}(\gamma_k)$ and $\p_{k{\rm H}}(x)$.
Let us define, for some $x_0$,
\begin{align}
\label{def1}
\p_{1{\rm S}}(\alpha) &:= {\p_1(\alpha,x_0)}/{\max_{\alpha'}\p_1(\alpha',x_0)}, \\
\p_{1{\rm H}}(x) &:= \max_\alpha\p_1(\alpha,x), \\
\p_{2{\rm S}}(\beta) &:= {\p_2(\beta,x_0)}/{\max_{\beta'}\p_1(\beta',x_0)}, \\
\label{def4}
\p_{2{\rm H}}(x) &:= \max_\beta\p_1(\beta,x).
\end{align}
It remains to show that these definitions satisfy \eqref{eq:req2}. We find that
\begin{align}
\p_{1{\rm S}}(\alpha)\p_{2{\rm S}}(\beta) &= \frac{\p_1(\alpha,x_0)\p_2(\beta,x_0)} {\max_{\alpha,\beta}\p_1(\alpha,x_0)\p_1(\beta,x_0)} \nn
&= \frac{\p(\alpha,\beta,\rho)}{\max_{\alpha,\beta}\p(\alpha,\beta,\rho)},
\end{align}
where in the second line we use the condition of Lemma \ref{lem2}, and $\rho$ may be arbitrary. Hence we find that $\p_{1{\rm S}}(\alpha)\p_{2{\rm S}}(\beta)$ is independent of $x_0$, and we therefore have
\begin{equation}
\label{eq:cru}
\p_{1{\rm S}}(\alpha)\p_{1{\rm H}}(x)\p_{2{\rm S}}(\beta)\p_{2{\rm H}}(x)
= \p_1(\alpha,x)\p_2(\beta,x).
\end{equation}

Rearranging Eq.\ \eqref{eq:cru} to isolate $\p_1(\alpha,x)$ gives
\begin{equation}
\p_1(\alpha,x) = \p_{1{\rm S}}(\alpha)\frac{\p_{1{\rm H}}(x)\p_{2{\rm S}}(\beta)\p_{2{\rm H}}(x)}{\p_2(\beta,x)}.
\end{equation}
Taking the maximum over $\alpha$ then yields
\begin{equation}
\p_{1{\rm H}}(x) = \frac{\p_{1{\rm H}}(x)\p_{2{\rm S}}(\beta)\p_{2{\rm H}}(x)}{\p_2(\beta,x)},
\end{equation}
where we have used $\max_\alpha\p_{1{\rm S}}(\alpha)=1$. Hence
\begin{equation}
\p_2(\beta,x) = \p_{2{\rm S}}(\beta)\p_{2{\rm H}}(x).
\end{equation}
Using this expression in Eq.\ \eqref{eq:cru} gives
\begin{equation}
\p_1(\alpha,x) = \p_{1{\rm S}}(\alpha)\p_{1{\rm H}}(x).
\end{equation}
Thus we have shown that the definitions \eqref{def1} to \eqref{def4} satisfy Eq.\ \eqref{eq:req2} provided the condition of Lemma \ref{lem2} holds.

Hence we have shown that the condition of Lemma \ref{lem2} and the condition \eqref{eq:req2} are equivalent, and therefore Theorem \ref{the2} follows from Lemma \ref{lem2}.
\end{proof}

\section{Quantum mechanics}
\label{sec:qm}
Next we consider the restriction on the loss for quantum mechanics, rather than
LHV theories. It might be thought that the case of quantum mechanics is equivalent, because a LHV theory can be thought of as an unentangled state, with the $x$ labeling orthogonal basis states. However, the case of quantum mechanics is slightly different, because we also need to consider all linear combinations of orthogonal basis states.

For quantum theory the probabilities of local measurement results are obtained via a positive operator-valued measure with elements $M_{k,\gamma_k}^{s_k}$, which corresponds to successful measurement result $s_k$ for measurement setting $\gamma_k$ for party $k$. We
also use the notation
\begin{equation}
M_{k,\gamma_k} := \sum_{s_k} M_{k,\gamma_k}^{s_k},
\end{equation}
for the operator corresponding to a successful measurement. Here the sum is over all successful measurement results $s_k$; the measurement operator for failure is $\openone-M_{k,\gamma_k}$.

The restriction on the loss in terms of hidden variables \eqref{eq:req} implies that
\begin{equation}
\label{eq:qure}
\p(\vec\gamma,\rho) = \p_{\rm R}(\rho)\prod_k \p_{k{\rm S}}(\gamma_k).
\end{equation}
This expression may be taken to be the definition of the restriction on the loss for quantum mechanics. This restriction may alternatively be expressed as
\begin{equation}
\label{conun}
\Tr\left[\left(\bigotimes_k M_{k,\gamma_k}\right) \rho\right] = \p_{\rm R}(\rho)\prod_k \p_{k{\rm S}}(\gamma_k).
\end{equation}
Let
\begin{equation}
\label{eq:mdef}
M := \frac{\bigotimes_k M_{k,\gamma_k}}{\prod_k \p_{k{\rm S}}(\gamma_k)}.
\end{equation}
The restriction \eqref{conun} implies that $\Tr(M\rho)=\p_{\rm R}(\rho)$ independent of $\vec\gamma$, and therefore $M$ is independent of $\vec\gamma$. This is why no subscript $\vec\gamma$ is given for $M$. Thus the restriction in Eq.\ \eqref{eq:qure} implies that the $M_{k,\gamma_k}/\p_{k{\rm S}}(\gamma_k)$ are independent of $\gamma_k$.

Using a similar method as for hidden variables, we can show that postselection with this form of loss cannot change the form of probability distributions obtained, either for the case of entangled or unentangled states. In particular, the result is as in the following theorem.

\begin{thm}
\label{the3}
For a Bell experiment on a quantum mechanical system, provided the loss is restricted by
\begin{equation}
p(\vec\gamma,\rho) = \p_{\rm R}(\rho)\prod_k \p_{k{\rm S}}(\gamma_k),
\end{equation}
the set of postselected probabilities it is possible to obtain without entanglement $\{\tilde p(\vec s|\vec \gamma,\rho)\}$ is identical to that which can be obtained without postselection or entanglement, and the set of postselected probabilities it is possible to obtain with entanglement is identical to that which can be obtained without postelection but with entanglement (provided no efficiency is zero).
\end{thm}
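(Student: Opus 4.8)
The plan is to mimic the structure of the proof of Theorem~1, but working with POVM elements and density operators rather than hidden-variable distributions. The key observation, already established in the text, is that the restriction \eqref{eq:qure} forces $M_{k,\gamma_k}/\p_{k{\rm S}}(\gamma_k)$ to be independent of $\gamma_k$; call this common operator $M_k$. Thus every successful local measurement, regardless of setting, ``sees'' the state only through the fixed operator $M_k$. I would define a renormali{\s}ed state
\begin{equation}
\tilde\rho := \frac{\left(\bigotimes_k \sqrt{M_k}\right)\rho\left(\bigotimes_k \sqrt{M_k}\right)}{\Tr\!\left[\left(\bigotimes_k M_k\right)\rho\right]},
\end{equation}
which is a legitimate density operator provided no efficiency is zero (so that $\Tr[(\bigotimes_k M_k)\rho]=\p_{\rm R}(\rho)>0$). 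The idea is that measuring $\tilde\rho$ with the \emph{lossless, renormali{\s}ed} POVM elements $\sqrt{M_k}^{-1}M_{k,\gamma_k}^{s_k}\sqrt{M_k}^{-1}=\p_{k{\rm S}}(\gamma_k)^{-1}\sqrt{M_k}^{-1}M_{k,\gamma_k}^{s_k}\sqrt{M_k}^{-1}$ reproduces exactly the postselected probabilities $\tilde p(\vec s|\vec\gamma,\rho)$.

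The first step is to verify that these renormali{\s}ed operators indeed form a valid POVM for each setting: each element is positive (it is a sandwich of the positive operator $M_{k,\gamma_k}^{s_k}$), and they sum over $s_k$ to $\sqrt{M_k}^{-1}M_{k,\gamma_k}\sqrt{M_k}^{-1}/\p_{k{\rm S}}(\gamma_k)$, which by the definition of $M_k$ equals $\sqrt{M_k}^{-1}M_k\sqrt{M_k}^{-1}=\openone$. Hence this is a bona fide lossless measurement. The second step is the direct computation: inserting $\tilde\rho$ and these operators into the trace formula, the $\sqrt{M_k}$ factors from the state cancel against the $\sqrt{M_k}^{-1}$ factors from the POVM, leaving $\Tr[(\bigotimes_k M_{k,\gamma_k}^{s_k})\rho]/(\p_{\rm R}(\rho)\prod_k\p_{k{\rm S}}(\gamma_k)) = p(\vec s|\vec\gamma,\rho)/\p(\vec\gamma,\rho) = \tilde p(\vec s|\vec\gamma,\rho)$, using \eqref{eq:qure} in the denominator. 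This establishes that every postselected distribution arises from some lossless scheme.

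The third step is the converse inclusion, which is the easier direction: any lossless distribution is trivially a postselected distribution (take all efficiencies equal to~1, i.e.\ $M_{k,\gamma_k}^{s_k}$ already summing to $\openone$), so the two families coincide. The fourth and final step is to check that this correspondence respects entanglement structure. If $\rho$ is a product state $\bigotimes_k\rho_k$, then $\tilde\rho$ is manifestly a product state as well, since $\sqrt{M_k}$ acts only on party $k$'s factor; conversely, the inverse map $\tilde\rho\mapsto\rho$ has the same local-tensor form, so product states correspond to product states and, by convexity, separable states to separable states. Running the argument in both directions within the separable class and within the full class gives the two claimed equalities.

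I expect the main obstacle to be purely technical rather than conceptual: ensuring that $M_k$ is invertible on the relevant support so that $\sqrt{M_k}^{-1}$ makes sense. The hypothesis ``no efficiency is zero'' guarantees $\Tr(M_k\rho)>0$ but not that $M_k$ has no zero eigenvalues. The clean fix is to restrict all operators to the range of $M_k$ (equivalently, to work on the subspace on which postselection keeps the system with nonzero probability), on which $M_k$ is positive-definite and $\sqrt{M_k}^{-1}$ is well defined; since the postselected probabilities are supported on that subspace anyway, no generality is lost. Once that restriction is in place, the remaining steps are the routine cancellations sketched above.
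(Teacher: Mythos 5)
Your proposal is correct and takes essentially the same route as the paper's proof: renormalize each POVM element as $(M_{k,\gamma_k})^{-1/2}M_{k,\gamma_k}^{s_k}(M_{k,\gamma_k})^{-1/2}$ to make the measurement lossless, conjugate the state by the setting-independent local operator $\bigotimes_k M_k^{1/2}$, check that the traces reproduce $\tilde p(\vec s|\vec\gamma,\rho)$, and use locality of the conjugation (in both directions) to map separable states to separable states. Two minor notes: the left-hand side of your displayed identity for the renormalized POVM element should be $(M_{k,\gamma_k})^{-1/2}M_{k,\gamma_k}^{s_k}(M_{k,\gamma_k})^{-1/2}$ rather than $\sqrt{M_k}^{-1}M_{k,\gamma_k}^{s_k}\sqrt{M_k}^{-1}$ (as written the two sides differ by the factor $\p_{k{\rm S}}(\gamma_k)$), and your choices of $\tilde\rho\propto(\bigotimes_k\sqrt{M_k})\rho(\bigotimes_k\sqrt{M_k})$ and of restricting to the support of $M_k$ are in fact the ones that make the cancellation and the inverses go through cleanly.
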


\begin{proof}
We define new measurement operators as
\begin{equation}
\tilde M_{k,\gamma_k}^\pm := (M_{k,\gamma_k})^{-1/2}M_{k,\gamma_k}^{s_k} (M_{k,\gamma_k})^{-1/2}.
\end{equation}
These measurement operators now give unit efficiency (so there is no postselection). We also define
\begin{equation}
\tilde \rho := \frac{M^{-1/2} \rho M^{-1/2}}{{\rm Tr}(M^{-1/2} \rho M^{-1/2})},
\end{equation}
where $M$ is defined as in Eq.\ \eqref{eq:mdef}. Because we restrict to the case of incomplete loss, the measurement operators are positive and can be inverted. The new measurement operators and states give exactly the same probabilities \emph{without} postselection as the original measurement operators and states did with postselection.

Furthermore, because $M$ is a local operator, it can not produce entangled states from unentangled states. Therefore the postselected Bell experiment with an unentangled state gives postselected probabilities identical to the non-postselected probabilities for a different Bell experiment with different measurement operators and a different unentangled state $\tilde \rho$.
\end{proof}

This result means that, for an unentangled state, all Bell inequalities must still be satisfied. This is exactly as we expect, because unentangled states may be regarded as equivalent to hidden variables. Similarly, inequalities that hold for entangled states (such as Tsirelson's bound) will be unaffected by the postselection.

For CHSH-Bell inequalities, it would be expected that Eq.\ \eqref{conun} is also a necessary condition due to the result for hidden-variable theories. However, it does not directly follow, and necessity needs to be proven separately. This is because one cannot arbitrarily choose the probabilities $p(\vec s|\vec\gamma,x)$ for given efficiencies. Therefore it is not necessarily possible to achieve the maximum value of Eq.\ \eqref{geneq}, as in the case of a general hidden-variable theory. Nevertheless, it is possible to show the result in the following theorem.

\begin{thm}
\label{the4}
For a quantum mechanical system, the condition that the efficiency for each party factori{\s}es as
\begin{equation}
\label{eq:req3}
\p(\vec\gamma,\rho) = \p_{\rm R}(\rho)\prod_k \p_{k{\rm S}}(\gamma_k),
\end{equation}
is necessary and sufficient for $|\be|\le 2$ to be satisfied for all probabilities
$p_k(\pm|\gamma_k,x)$ (provided no efficiency is zero).
\end{thm}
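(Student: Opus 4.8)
The two directions call for different arguments, so I would handle them separately. Sufficiency is immediate from Theorem~\ref{the3}: if the loss obeys Eq.~\eqref{eq:req3}, that theorem identifies the postselected probabilities obtainable from any unentangled state with the non-postselected probabilities obtainable from some other unentangled state, and since the ordinary CHSH-Bell inequality $|\be_0|\le 2$ holds for every unentangled state without postselection, $|\be|\le 2$ holds for every unentangled state here as well. The real work is necessity, and -- as the discussion preceding the theorem stresses -- it cannot be read off from the saturability of Eq.~\eqref{geneq}, because in quantum mechanics the numbers $\Delta_k(\gamma_k,x)$ are not free to be chosen for given efficiencies. I would therefore prove the contrapositive by an explicit construction.

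The first step is to recast the failure of Eq.~\eqref{eq:req3} as an operator statement. Inserting product states $\rho=\rho_1\otimes\rho_2$ into Eq.~\eqref{conun}, forming the ratio of the resulting identities for two settings of one party, and using that a Hermitian operator is fixed by its expectation values, one finds that Eq.~\eqref{eq:req3} (with state-independent $\p_{k{\rm S}}$ and holding for all $\rho$) is equivalent to the statement that for each party $k$ all the $M_{k,\gamma_k}$ are proportional to a single operator $M_k$ -- the observation already made below Eq.~\eqref{eq:mdef}. So if the condition fails there is a party, say Alice, with $M_{1,A}$ not proportional to $M_{1,a}$; since the operators are invertible (no efficiency is zero), another use of the same fact provides two normalised states $\ket{\phi_1^1},\ket{\phi_1^2}$ for Alice with all four efficiencies $\p_1(\alpha,x):=\bra{\phi_1^x}M_{1,\alpha}\ket{\phi_1^x}$ strictly positive and with $\p_1(A,1)\p_1(a,2)\neq\p_1(A,2)\p_1(a,1)$.

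The second step is the construction. For Bob take any two orthonormal states $\ket{\phi_2^1},\ket{\phi_2^2}$, and let $\rho=p_1\ket{\phi_1^1}\bra{\phi_1^1}\otimes\ket{\phi_2^1}\bra{\phi_2^1}+p_2\ket{\phi_1^2}\bra{\phi_1^2}\otimes\ket{\phi_2^2}\bra{\phi_2^2}$, which is a two-valued hidden-variable model of exactly the form of Sec.~\ref{sec:lhv} with label $x\in\{1,2\}$. Let Alice report $+$ deterministically at both settings and Bob report $+$ deterministically at $B$, while at setting $b$ Bob uses $M_{2,b}^{\pm}=(M_{2,b}\pm N_b)/2$ with $N_b=t(\ket{\phi_2^1}\bra{\phi_2^1}-\ket{\phi_2^2}\bra{\phi_2^2})$; for small $t>0$ this is a legitimate POVM because $M_{2,b}$ is positive definite. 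Then $\Delta_1(\alpha,x)=\p_1(\alpha,x)$, $\Delta_2(B,x)=\p_2(B,x)$ and $\Delta_2(b,x)=\pm t$. Substituting into Eq.~\eqref{sinz}, every sum of the form $\sum_x p(x|\rho)\,\p_1(\alpha,x)\p_2(\beta,x)/\p(\alpha,\beta,\rho)$ equals $1$, so $\be$ collapses to $2+\sum_x p(x|\rho)\,\Delta_2(b,x)\,h(x)$ with $h(x)=\p_1(A,x)/\p(A,b,\rho)-\p_1(a,x)/\p(a,b,\rho)$. Since $\Delta_2(b,x)=\pm t$, this correction is $\pm t\,[p_1 h(1)-p_2 h(2)]$, and a short calculation shows that $p_1 h(1)-p_2 h(2)$ is nonzero for at least one of the two labellings of $x$ -- when $\p_2(b,1)=\p_2(b,2)$ it is proportional to $\p_1(A,1)\p_1(a,2)-\p_1(A,2)\p_1(a,1)$. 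Choosing the sign of $N_b$ then makes the correction positive, so $\be>2$. Hence a violation exists whenever Eq.~\eqref{eq:req3} fails, which completes the proof.

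The step I expect to be genuinely delicate, and the reason a direct appeal to Eq.~\eqref{geneq} is unavailable, is matching the $\Delta_k(\gamma_k,x)$ that a violation requires to what quantum measurements with the given efficiency operators can actually deliver. The construction circumvents this by making every measurement sharp except Bob's setting $b$ -- so only one $\Delta$ takes a non-extremal value -- and by taking Bob's two states orthogonal, which is exactly what lets the single operator $N_b$ give $\Delta_2(b,x)$ opposite signs at $x=1$ and $x=2$. The remaining points are bookkeeping: all efficiencies are positive by invertibility, and $p_1 h(1)-p_2 h(2)$ is kept nonzero by the choice of labelling (or, if needed, a small generic perturbation of the states).
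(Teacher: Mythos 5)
Your proof is correct, but the necessity construction takes a genuinely different route from the paper's. The shared skeleton: sufficiency via Theorem~\ref{the3}; reduction of the failure of Eq.~\eqref{eq:req3} to non-proportionality of some party's success operators $M_{k,\gamma_k}$; a two-term diagonal separable state so that Eq.~\eqref{sinz} applies with $x\in\{1,2\}$; and deterministic (extremal) outcomes on most settings. The divergence is in where the work is done. The paper demands \emph{orthogonal} states $\ket{\varphi_0},\ket{\varphi_1}$ on the party whose operators are non-proportional --- which requires the trace/diagonalisation argument surrounding Eq.~\eqref{eq:prop} --- and then writes down explicit rank-one outcome operators (e.g.\ $M_a^+=\ket{\varphi_0}\bra{\varphi_0}\langle M_a\rangle_0$, $M_a^-=M_a-M_a^+$) to obtain a closed-form value of $\be>2$. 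You put no orthogonality constraint on that party (for positive-definite, non-proportional $M_A,M_a$ the existence of two states with $\p_1(A,1)\p_1(a,2)\ne\p_1(A,2)\p_1(a,1)$ is immediate) and instead place the orthogonality on the \emph{other} party, where it costs nothing, using it only so that the single perturbation $N_b$ delivers $\Delta_2(b,x)=\pm t$ with opposite signs. Your route buys simplicity and safety: $(M_{2,b}\pm N_b)/2$ is manifestly a valid POVM for small $t$, whereas the positivity of the paper's $M_a^-=M_a-\ket{\varphi_0}\bra{\varphi_0}\langle M_a\rangle_0$ is not automatic and in general needs the rank-one piece rescaled; moreover your key identity holds for both labellings --- the numerator of $p_1h(1)-p_2h(2)$ works out to $p_1p_2\,[\p_2(b,1)+\p_2(b,2)]\,[\p_1(A,1)\p_1(a,2)-\p_1(A,2)\p_1(a,1)]$ with no assumption on $\p_2(b,\cdot)$ --- so the case hedge you flag is unnecessary. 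What it gives up is quantitative strength: you certify only $\be=2+O(t)$, while the paper's construction yields an explicit finite violation. Both constructions implicitly assume the second party's Hilbert space supports two orthogonal states, i.e.\ has dimension at least two.
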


\begin{proof}
The sufficiency follows immediately from Theorem \ref{the3}, so it only remains to show necessity. The condition in the theorem is equivalent to Eq.\ \eqref{conun}, so we show necessity for Eq.\ \eqref{conun}. First consider the case where Eq.\ \eqref{conun} is violated. In that case, either $M_{\alpha}/\p_{1{\rm S}}(\alpha)$ is dependent on $\alpha$, or $M_{\beta}/\p_{2{\rm S}}(\beta)$ is dependent on $\beta$ (or both). We omit the subscripts $1$ and $2$ in the notation $M_{\alpha}$ and $M_{\beta}$ for simplicity. The party is simply indicated by the symbol used for the measurement setting ($\alpha$ for party 1 and $\beta$ for party 2).

If $M_{\alpha}/\p_{1{\rm S}}(\alpha)$ (for example) is not independent of $\alpha$, then there exist orthogonal states $\ket{\varphi_0}$ and $\ket{\varphi_1}$ such that
\begin{equation}
\label{eq:prop}
\bra{\varphi_0}M_{a}\ket{\varphi_0}\bra{\varphi_1}M_{A}\ket{\varphi_1} \ne 
\bra{\varphi_0}M_{A}\ket{\varphi_0}\bra{\varphi_1}M_{a}\ket{\varphi_1}.
\end{equation}
This result may be proven in the following way. Let us assume that all
orthogonal states give equality in \eqref{eq:prop}. Then, for any orthogonal
basis $\{\ket{\varphi_j}\}$, we have
\begin{equation}
\bra{\varphi_j}M_{a}\ket{\varphi_j}=\kappa\bra{\varphi_j}M_{A}\ket{\varphi_j},
\end{equation}
for some $\kappa$. Therefore $\Tr(M_{a})=\kappa\Tr(M_{A})$. Now we can take
$\{\ket{\varphi'_j}\}$ to be the basis which diagonali{\s}es $M_{a}-\kappa M_{A}$. Then
we have
\begin{equation}
\bra{\varphi'_j}M_{a}\ket{\varphi'_j}=\kappa'\bra{\varphi'_j}M_{A}\ket{\varphi'_j},
\end{equation}
so $\Tr(M_{a})=\kappa'\Tr(M_{A})$. Because $\Tr(M_{\alpha})$ is nonzero (these are
positive operators), we have $\kappa=\kappa'$. Therefore the diagonal elements
of $M_{a}-\kappa M_{A}$ must be zero in the basis $\{\ket{\varphi'_j}\}$. As this is
the basis which diagonali{\s}es $M_{a}-\kappa M_{A}$, we must have $M_{a}=\kappa M_{A}$.
Hence we find that equality in \eqref{eq:prop} implies $M_{a}=\kappa M_{A}$, so if
$M_{a}$ is not proportional to $M_{A}$, then there must exist orthogonal states
$\ket{\varphi_0}$ and $\ket{\varphi_1}$ such that \eqref{eq:prop} is satisfied.

We obtain the exact equivalent result if $M_{\beta}/\p_{2{\rm S}}(\beta)$ is not independent of $\beta$. Therefore, if Eq.\ \eqref{conun} is violated, we can select $\ket{\varphi_j}$
and $\ket{\chi_j}$ such that
\begin{align}
\bra{\varphi_0}M_{a}\ket{\varphi_0}\bra{\varphi_1}M_{A}\ket{\varphi_1} &\le 
\bra{\varphi_0}M_{A}\ket{\varphi_0}\bra{\varphi_1}M_{a}\ket{\varphi_1}, \nn
\bra{\chi_0}M_{b}\ket{\chi_0}\bra{\chi_1}M_{B}\ket{\chi_1} &\le 
\bra{\chi_0}M_{B}\ket{\chi_0}\bra{\chi_1}M_{b}\ket{\chi_1},
\end{align}
with strict inequality in at least one of these cases. We therefore have
\begin{align}
\label{eq:both}
&\bra{\varphi_0}M_{a}\ket{\varphi_0}\bra{\varphi_1}M_{A}\ket{\varphi_1}
\bra{\chi_0}M_{b}\ket{\chi_0}\bra{\chi_1}M_{B}\ket{\chi_1} \nn &<
\bra{\varphi_0}M_{A}\ket{\varphi_0}\bra{\varphi_1}M_{a}\ket{\varphi_1}
\bra{\chi_0}M_{B}\ket{\chi_0}\bra{\chi_1}M_{b}\ket{\chi_1}.
\end{align}

Now consider the density operator
\begin{equation}
\rho = \left( \ket{\varphi_0}\bra{\varphi_0} \otimes
\ket{\chi_0}\bra{\chi_0} +\ket{\varphi_1}\bra{\varphi_1} \otimes
\ket{\chi_1}\bra{\chi_1} \right)/2.
\end{equation}
For this density operator
\begin{align}
{\cal B} &= \frac 12 \sum_{j=0}^1 \left(\frac{\langle\Delta M_{A}\rangle_j
\langle\Delta M_{B}\rangle_j}{\Tr[(M_{A}\otimes M_{B})\rho]}+
\frac{\langle\Delta M_{A}\rangle_j\langle\Delta M_{b}\rangle_j}
{\Tr[(M_{A}\otimes M_{b})\rho]}\right. \nn & \left. \quad +
\frac{\langle\Delta M_{a}\rangle_j\langle\Delta M_{B}\rangle_j}
{\Tr[(M_{a}\otimes M_{B})\rho]}-
\frac{\langle\Delta M_{a}\rangle_j\langle\Delta M_{b}\rangle_j}
{\Tr[(M_{a}\otimes M_{b})\rho]}\right),
\end{align}
where $\Delta M_{k,\gamma_k}=M_{k,\gamma_k}^+-M_{k,\gamma_k}^-$, and
$\langle \cdots \rangle_j$ indicates the expectation value using the
state $\ket{\varphi_j}$ (for Alice) or $\ket{\chi_j}$ (for Bob).
Given the $M_{k,\gamma_k}$, consider the measurement operators for the individual
results given by
\begin{align}
M_{A}^+ &= M_{A}, \hspace{1.57cm} M_{a}^+ = \ket{\varphi_0}\bra{\varphi_0}\langle M_{a}\rangle_0, \nn
M_{A}^- &= 0, \hspace{1.97cm}
M_{a}^- = M_{a}-\ket{\varphi_0}\bra{\varphi_0}\langle M_{a}\rangle_0, \nn
M_{B}^+ &= \ket{\varphi_0}\bra{\varphi_0}\langle M_{B}\rangle_0, \hspace{2.11cm}
M_{b}^+ = M_{B}, \nn
M_{B}^- &= M_{B}-\ket{\varphi_0}\bra{\varphi_0}\langle M_{B}\rangle_0, \hspace{1.1cm} M_{b}^- = 0.
\end{align}
Using these measurement operators gives
\begin{align}
\langle \Delta M_{a} \rangle_j &= (-1)^j\langle M_{a} \rangle_j, \quad
\langle \Delta M_{A} \rangle_j = \langle M_{A} \rangle_j, \nn
\langle \Delta M_{b} \rangle_j &= \langle M_{b} \rangle_j, \quad
\langle \Delta M_{B} \rangle_j = (-1)^j\langle M_{B} \rangle_j.
\end{align}
This gives the postselected Bell quantity as
\begin{align}
{\cal B} &= 4-\frac{2\langle M_{a}\rangle_0\langle M_{b}\rangle_0}
{\langle M_{a}\rangle_0\langle M_{b}\rangle_0+
\langle M_{a}\rangle_1\langle M_{b}\rangle_1} \nn & \quad
-\frac{2\langle M_{A}\rangle_1\langle M_{B}\rangle_1}
{\langle M_{A}\rangle_0\langle M_{B}\rangle_0+
\langle M_{A}\rangle_1\langle M_{B}\rangle_1}.
\end{align}
Equation \eqref{eq:both} may be rearranged to give
\begin{align}
&\frac{\langle M_{a}\rangle_0\langle M_{b}\rangle_0}
{\langle M_{a}\rangle_0\langle M_{b}\rangle_0+
\langle M_{a}\rangle_1\langle M_{b}\rangle_1} \nn &
+\frac{\langle M_{A}\rangle_1\langle M_{B}\rangle_1}
{\langle M_{A}\rangle_0\langle M_{B}\rangle_0+
\langle M_{A}\rangle_1\langle M_{B}\rangle_1}<1.
\end{align}
We therefore obtain $\be>2$.

Thus we find that, if Eq.\ \eqref{eq:qure} is not satisfied, for given $M_{k,\gamma_k}$ there exists a separable state and a set of measurement operators $M_{k,\gamma_k}^\pm$ such that $\be>2$. Eq.\ \eqref{eq:qure} is therefore a necessary condition for the CHSH-Bell inequality to be satisfied for separable states. As Eq.\ \eqref{eq:qure} is equivalent to the condition in the theorem, we have proven both necessity and sufficiency as required.
\end{proof}

For entangled states, $\be$ is limited by Tsirelson's bound of $2\sqrt{2}$ \cite{tsirelson}. As explained above, the restriction given by Eq.\ \eqref{eq:qure} is sufficient for Tsirelson's bound to be satisfied due to the result in Theorem \ref{the3}. However, it turns out that it is not necessary. In particular, we find that Tsirelson's bound is not violated if there is no loss on one side, and moderate loss on the other side.  Let $M_{a}=M_{A}=M_{b}=\openone$, and let
\begin{equation}
M_{B}=
\left[ {\begin{array}{*{20}c}
   1 & 0  \\
   0 & p  \\
\end{array}} \right].
\end{equation}
We have performed numerical maximi{\s}ations over the measurements and states for
a range of values of $p$, and the results are shown in Fig.\ \ref{counter}.
For small values of $p$, below about $0.17$, there is violation of Tsirelson's
bound, but for larger values no violation of Tsirelson's bound is achieved.
This indicates that there is not a simple necessary and sufficient condition in
the case of Tsirelson's bound. Whether Tsirelson's bound can be violated depends
on the particular value of the loss.

\begin{figure}
\centering
\includegraphics[width=0.45\textwidth]{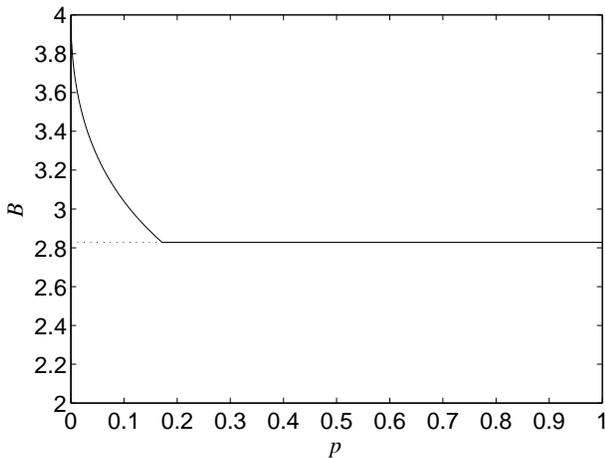}
\caption{The numerically found maximal values of $\be$ with $M_a=M_A=M_b=\openone$ and $M_B={\rm diag}(1,p)$. Tsirelson's bound is shown as the dotted line for comparison.}
\label{counter}
\end{figure}

An interesting fact is that if the condition \eqref{eq:qure} is violated for both parties, in the sense that $M_{k,\gamma_k}/\p_{k{\rm S}}(\gamma_k)$ is dependent on $\gamma_k$ for both Alice and Bob, then it is always possible to find measurements such that Tsirelson's bound is violated. A scheme for doing this is presented in Appendix \ref{symtsi}.

\section{Postselection that violates Tsirelson's bound}
\label{sec:tsi}
A central motivation for this work was to determine forms of loss that can violate Tsirelson's bound while ensuring that the CHSH-Bell inequality is still valid, in order to generali{\s}e Ref.\ \cite{cab}. The result of the previous section is that the form of postselection that is necessary for the CHSH-Bell inequality to hold for unentangled states (as shown in Theorem \ref{the4}) also implies that Tsirelson's bound holds for entangled states (as follows from Theorem \ref{the3}).

This raises the question of how the postselection in Ref.\ \cite{cab} differs from the postselection used here. The difference is that the postselection relies on the experimenter knowing what the state is. This method for constructing Bell inequalities is critici{\s}ed in Ref.\ \cite{anne}, because it introduces an additional assumption beyond locality and realism. If an entangled state is incorrectly assumed, but the actual state is unentangled, then the method of Ref.\ \cite{cab} still yields a violation of the Bell inequality (See Appendix \ref{sec:cab}). We therefore avoid the approach of assuming a state here, and simply allow postselection that may depend on the state and measurement setting. 

A crucial subtlety in our results is that the condition \eqref{eq:qure} is necessary for the CHSH-Bell inequality to be satisfied for unentangled states, provided we consider arbitrary measurement schemes for the given form of loss. This does not eliminate the
possibility that there are particular measurement schemes with postselection
such that the Bell inequality is satisfied for unentangled states but
Tsirelson's bound may be violated for entangled states. We consider such a scheme in this section.

We emphasi{\s}e that this scheme does not violate the Bell inequality with unentangled states \emph{provided} the measurements are acting as expected. But, if there is an underlying hidden variable theory, or equivalently if the measurements are not acting as expected on the underlying quantum state, then the CHSH-Bell inequality can be violated without entanglement.

The advantage of this postselection is that it also increases the Bell quantity that may be obtained with entanglement. The standard result that the efficiency must be at least $2(\sqrt 2 - 1)\approx 82.8\%$ with maximally entangled states \cite{garg} is based upon the assumption that Tsirelson's bound still holds for entangled states. If the Bell quantity is also enhanced for entangled states, then the efficiency may be lower before the value that is possible with a LHV theory reaches that possible with entangled states.

The example we consider is where Alice and Bob share the two-qubit entangled state
\begin{equation}
|\psi_-\rangle_{12}={\cal N}(|u\rangle_1|u\rangle_2-|v\rangle_1|v\rangle_2)
\label{eq:es}
\end{equation}
where ${\cal N}$ is the normali{\s}ation factor. The states $\ket u$ and $\ket v$
are assumed to be non-orthogonal, with the real inner product $\kappa$. We define
a local transformation $R(\theta)$ which acts as $R(\theta)|u\rangle=
|u\rangle$ and $R(\theta)|v\rangle=e^{i \theta}|v\rangle$.

The transformation $R(\theta)$ is non-unitary and cannot be
reali{\s}ed deterministically with a nonzero $\kappa$. Alice performs the local
operation $R(\theta)$, and Bob applies $R(\phi)$. The different measurement settings for Alice and Bob are achieved by using
different rotation angles $\theta$ and $\phi$. This
non-deterministically transforms the state $|\psi_-\rangle_{12}$ to
\begin{equation}
|\psi_-^\prime\rangle_{12}= {\cal N}^\prime(|u\rangle_1|u\rangle_2-e^{i(\theta+\phi)}
|v\rangle_1|v\rangle_2).
\end{equation}
Alice and Bob then perform orthogonal measurements on their respective qubits
using the basis $|\pm\rangle={\cal N}_\pm(|u\rangle\pm|v\rangle)$, and there are
four possible cases of the combined measurement results. The probabilities for these four cases are
\begin{align}
P_{++}(\theta,\phi) &= \frac{(1+\kappa)^2(1-\cos[\theta+\phi])}
{4(1-\kappa^2\cos[\theta+\phi])},\nonumber\\
P_{+-}(\theta,\phi) &= P_{-+}(\theta,\phi)=\frac{(1-\kappa^2)
(1+\cos[\theta+\phi])}
{4(1-\kappa^2\cos[\theta+\phi])},\nonumber\\
P_{--}(\theta,\phi) &= \frac{(1-\kappa)^2(1-\cos[\theta+\phi])}
{4(1-\kappa^2\cos[\theta+\phi])},
\label{eq:1234}
\end{align}
where $P_{s_1s_2}(\theta,\phi)= |\langle \psi_-^\prime
|s_1\rangle_1|s_2\rangle_2|^2$.

 The Bell function $\be$ is then
obtained as
\begin{equation}
\be=C(\theta_A,\phi_B)+C(\theta_A,\phi_b)+C(\theta_a,\phi_B)-C(\theta_a,\phi_b),
\end{equation}
where
\begin{align}
C(\theta,\phi) &=P_{++}-P_{+-}-P_{-+}+P_{--} \nn
&=\frac{\kappa^2-\cos(\theta+\phi)}{1-\kappa^2\cos(\theta+\phi)}.
\end{align}
The maximum is obtained for
\begin{equation}
(\theta_A+\theta_a+\phi_B+\phi_b)=\theta_a-\theta_A=\phi_b-\phi_B=\Theta,
\end{equation}
for some parameter $\Theta$. The Bell function then depends only on $\Theta$:
\begin{equation}
\be=3C\left(\Theta/2,0\right)-C\left(3\Theta/2,0\right).
\end{equation}
There does not appear to
be an analytic solution for $\Theta$, but a good approximation is given by
$\Theta = \pi[17+\cos(\pi\kappa)]/12$.

To determine the loss that is required for the operation $R(\theta)$ on $|u\rangle$ and $|v\rangle$, consider the singular value
decomposition \cite{SVD} of the matrix representing $R(\theta)$, i.e.,
$U_+ W U_-^\dagger$ (omitting a global phase) where
\begin{align}
&U_\pm=\frac 1{\sqrt{b^2+(r\pm d)^2}}\left[\begin{array}{*{10}cc}-ib & r\pm d \\
-(r\pm d) & ib \\ \end{array}\right], \label{eq:UU} \\
&W=\left[ \begin{array}{*{10}cc} \sqrt{\frac{a-r}2} & 0,\\
0 & \sqrt{\frac{a+r}2} \label{eq:V} \\ \end{array}\right],
\end{align}
with
$b=\sin\theta\big( \sqrt{(1+\kappa)/(1-\kappa)}
-\sqrt{(1-\kappa)/(1+\kappa)} \big)$,
$d=4\kappa\sin^2(\theta/2)/(1-\kappa^2)$,
$a=2+\kappa d$, and $r=\sqrt{a^2-4}$. This therefore gives the minimum detector efficiency as
\begin{equation}
\eta = \sqrt{\frac{a-r}{a+r}}.
\end{equation}
The maximum value of $\be$ possible with LHV theories is then $4/\eta-2$ \cite{garg}.

The values of the quantity $\be$ for the initial state $|\psi_-\rangle_{12}$, as
well as the maximum values for separable states, and the maximum values for LHV theories, are plotted in Fig.\ \ref{f2}. The maximum value for separable states was obtained by numerical optimi{\s}ation, and the maximum value for LHV theories was obtained using the formula $4/\eta-2$.
For $|\psi_-\rangle_{12}$, the value for entangled states varies from
Tsirelson's bound for $\kappa=0$ to a maximum of 4 in the limit $\kappa\to 1$.
On the other hand, $\be$ initially has a maximum of $\sqrt 2$ for separable
states. It increases with $\kappa$, but still does not exceed 2 for
$\kappa\lesssim 0.357$. For this value of $\kappa$, $\be$ is almost 3 (2.966)
for $|\psi_-\rangle_{12}$. On the other hand, the maximum value of $\be$ for LHV theories increases from 2 for $\kappa=0$ (no loss), and crosses the line for entangled states at $\kappa\approx 0.124$. At this value of $\kappa$, $\eta\approx 82.6\%$, slightly below the limit of $82.8\%$ derived in Ref.\ \cite{garg}.

\begin{figure}
\begin{centering}
{\includegraphics[width=0.45\textwidth]{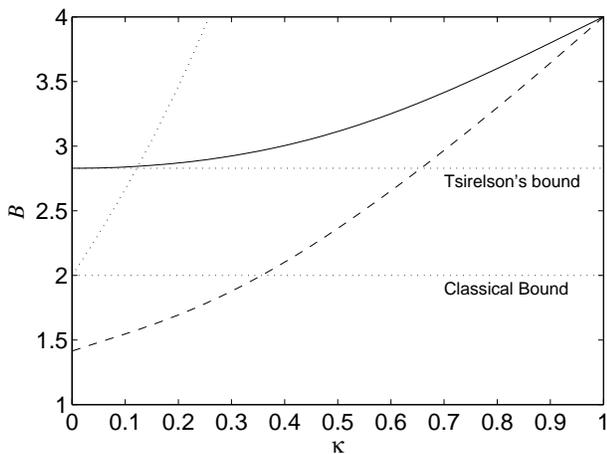}} \end{centering} \caption{The CHSH-Bell
quantity $\be$ using postselection. The value for $|\psi_-\rangle_{12}$ is shown
as the solid line, the maximum value for separable states is shown as the
dashed line, and the maximum value for LHV theories is shown as the dotted curve.} \label{f2}
\end{figure}

We have also considered more general measurements that do not have the simple interpretation given above. These measurements are numerically optimi{\s}ed to maximi{\s}e the value of $\be$ for entangled states with a given level of loss. The results for this scheme are shown in Fig.\ \ref{f3}. For this numerically optimi{\s}ed scheme the value of $\be$ can be greater than 3 (3.0046) for entangled states before it reaches 2 for separable states. The value of $\eta$ when the value of $\be$ for LHV theories reaches that for entangled states is slightly lower than before, at $82.4\%$.

\begin{figure}
\begin{centering}
{\includegraphics[width=0.45\textwidth]{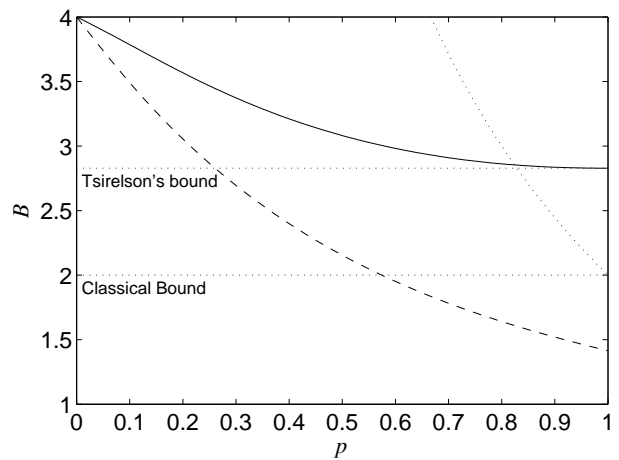}} \end{centering} \caption{The CHSH-Bell
quantity $\be$ using postselection with a numerically optimi{\s}ed scheme. The value for the maximally entangled state is shown
as the solid line, the maximum value for separable states is shown as the
dashed line, and the maximum value for LHV theories is shown as the dotted curve.} \label{f3}
\end{figure}

Methods for general non-unitary transformations such as $R(\theta)$ have
been presented in Ref.\ \cite{tera05}, and methods for performing single-qubit
non-unitary transformations in optical systems in Ref.~\cite{bergou00}. 
An experiment for our proposal can be performed using current technology with
the photon-polari{\s}ation qubit basis, namely, $|H\rangle$ and $|V\rangle$ where
$|H\rangle$ is the horizontal-polari{\s}ation state of a photon and $|V\rangle$
vertical. The two non-orthogonal states $|u\rangle$ and $|v\rangle$ can be
defined as $|u\rangle=\lambda|H\rangle+\sqrt{1-\lambda^2}|V\rangle$ and
$|v\rangle=\lambda|H\rangle-\sqrt{1-\lambda^2}|V\rangle$, where $\lambda$ is
assumed to be a real value larger than $1/\sqrt{2}$. In this case, the
orthogonal basis states, $|+\rangle=|H\rangle$ and $|-\rangle=|V\rangle$,
are obtained and $\kappa=\langle u|v\rangle=2\lambda^2-1$. The entangled
state $|\psi_-\rangle_{12}$ turns out to be equal to ${\cal N}(|u\rangle_1|
u\rangle_2-|v\rangle_1|v\rangle_2)=(|H\rangle_1|V\rangle_2+|V\rangle_1|H
\rangle_2)/\sqrt{2}$. It is well known that such a two-qubit entangled state can
be generated using spontaneous parametric down-conversion \cite{Kwiat95}.

Via a singular value decomposition, $R(\theta)$ may be expressed as
$U_+ W U_-^\dagger$, where the $U_\pm$ are unitary and $W$ is diagonal, as given in Eqs.\ \eqref{eq:UU} and \eqref{eq:V}, respectively.
The unitary operations correspond to single qubit rotations, and can be
implemented using linear optical elements \cite{KLM}. After $U_-^\dagger$ is
performed on the corresponding qubit, the non-unitary component $W$ can be
achieved using selective absorption at one polari{\s}ation. Provided the photon is
detected in the final measurement, $W$ will have been successfully applied to
the qubit. Then $U_+$ is applied as discussed above. The final measurement is
then just a measurement in the polari{\s}ation basis, which can be achieved using
a polari{\s}ing beam splitter and photodetectors.

\section{Conclusions}
\label{sec:conc}
Our results provide clear guidelines for determining if the sampling is fair in any particular Bell experiment. If the goal is simply to demonstrate entanglement, rather than disprove LHV theories, then the condition required is given by Eq.\ \eqref{eq:qure}, which can be experimentally tested. If it can be demonstrated to hold, then any violation of Bell inequalities is due to entanglement. This is relevant to quantum key distribution, where security may be proven by violation of Bell inequalities \cite{ekert,shor,acin,barrett,pironio}. In cases where there is loss, but the sampling is tested and found to be fair, then the key distribution should still be secure.

In the case where hidden variables are allowed, then the relevant condition is that the efficiency factori{\s}es as in Eq.\ \eqref{eq:req}. Any condition that depends on the hidden variables can not be proven to hold, because it is possible that it might be violated for values of the hidden variable that it is not possible to prepare. However, it is possible to falsify it. The great advantage of providing a \emph{necessary} condition, as we have done, is that if it can be shown not to hold, then the sampling is shown to be of a form that invalidates the CHSH-Bell inequality. In contrast, if the condition that is tested is \emph{not} necessary, then testing it is not useful. Showing that it does not hold does not show that the sampling is of a form that invalidates the CHSH-Bell inequality, and it cannot be conclusively shown to hold. Thus our results put testing of the sampling in Bell experiments \cite{aden} on a rigorous basis.

These results also cast light on the effort to develop ways of enhancing Bell violations using postselection, as discussed in Refs.\ \cite{pawlowski,cab}. Here we have found that the condition on the loss that is necessary to prevent violation of the CHSH-Bell inequality with postselection is also sufficient to ensure that Tsirelson's bound is not violated. Our results show that this form of sampling cannot be used to enhance violation of Bell inequalities beyond what is possible without postselection.

This suggests that is should not be possible to violate Tsirelson's bound with postselection while retaining a valid Bell inequality (as claimed in \cite{cab}). On the other hand, it is possible for postselection to enhance the violation of Bell inequalities up to Tsirelson's bound without invalidating them. For example, Procrustean entanglement concentration \cite{Bennett} has been demonstrated to enhance the violation of Bell inequalitites \cite{Kwiat01}. Because Procrustean entanglement concentration corresponds to fair sampling, this postselection does not invalidate the Bell inequalities for unentangled states, and also does not lead to violation of Tsirelson's bound for entangled states.

Our results do not entirely rule out the possibility of violating Tsirelson's bound while the CHSH-Bell inequality still holds for unentangled states. However, this requires examining the measurement used, rather than simply relying on the form of loss. We have found schemes that allow the Bell quantity to be as large as 3 before the CHSH-Bell inequality is violated with separable states. These schemes also allow the efficiency to be below the limit of $82.8\%$ derived by Ref.\ \cite{garg} before the value of $\be$ possible with LHV theories reaches that obtained for entangled states. This is because it is assumed in Ref.\ \cite{garg} that the loss cannot enhance violation of the Bell inequality beyond Tsirelson's bound.

\acknowledgments
We thank R. Cleve for useful discussions. This work was supported by Australian Research Council and Queensland State Government, the WCU program, and the KOSEF grant funded by the Korean government (MEST) (R11-2008-095-01000-0).

\appendix

\section{Efficiency independent of the measurement setting for one state}
\label{sec:example}
As an example of a case where $|{\cal B}|> 2$ when the efficiency is independent of the measurement settings for the state under consideration, but Eq.\ \eqref{eq:qure} is violated, consider the state
\begin{equation}
\rho = \frac 12 \left( \ket 0 \bra 0 + \ket 1 \bra 1 \right).
\end{equation}
For Bob we have the measurement operators
\begin{align}
M_B^- =
\left[ \begin{array}{*{20}c}
   0 & 0 \\
   0 & 0 \\
\end{array} \right], &\qquad
M_B^+ =
\left[ \begin{array}{*{20}c}
   1 & 0 \\
   0 & 1 \\
\end{array} \right], \\
M_b^- =
\left[ \begin{array}{*{20}c}
   0 & 0 \\
   0 & 1 \\
\end{array} \right], &\qquad
M_b^+ =
\left[ \begin{array}{*{20}c}
   1 & 0 \\
   0 & 0 \\
\end{array} \right].
\end{align}
That is, Bob either always returns the result $+$, or performs a measurement in the computational basis, and returns $+$ or $-$ for $0$ or $1$, respectively. Alice's measurement operators are
\begin{align}
M_A^- =
\left[ \begin{array}{*{20}c}
   0 & 0 \\
   0 & 0 \\
\end{array} \right], &\qquad
M_A^+ =
\left[ \begin{array}{*{20}c}
   1 & 0 \\
   0 & 0 \\
\end{array} \right], \\
M_a^- =
\left[ \begin{array}{*{20}c}
   0 & 0 \\
   0 & 0 \\
\end{array} \right], &\qquad
M_a^+ =
\left[ \begin{array}{*{20}c}
   0 & 0 \\
   0 & 1 \\
\end{array} \right].
\end{align}
That is, for measurement $A$ Alice returns the result $+$ for $\ket 0$, or there is failure for $\ket 1$. For measurement $a$ Alice returns $+$ for $\ket 1$, or failure for $\ket 0$. Experimentally this could correspond to a beam splitter sending two different polari{\s}ations to different photodetectors, one of which always fails.

For this example, the probability of success is always $1/2$, and we obtain ${\cal B}=4$. Thus we obtain the maximum possible violation of the Bell-CHSH inequality and Tsirelson's bound with a separable state, and with equal probabilities of success for each measurement setting. This demonstrates that equal probabilities of success for each measurement setting for one state does not imply that the loss preserves the CHSH-Bell inequality. It is also important to note that unequal efficiencies for detectors is sufficient to invalidate the Bell-CHSH inequality and Tsirelson bound. This is an important case, because one would not expect detectors to be identical experimentally.

We may also use this case to provide a counterexample which demonstrates that the condition \eqref{eq:qure} does not imply \eqref{eq:req}. Let the hidden variable take the values 0 and 1, and give the probabilities of the measurement results and successes as above. However, take the quantum system to be 1-dimensional, so there are no measurable features. The probability of failure for all combinations of measurement settings is equal to $1/2$, regardless of the state, so it is clear that \eqref{eq:qure} is satisfied. However, the hidden variables do not satisfy \eqref{eq:req}, and the Bell inequality is violated. In fact, for this hidden variable theory, Tsirelson's bound is violated as well, despite \eqref{eq:qure} being satisfied. This is because the proof in Sec.\ \ref{sec:qm} assumes that probabilities are obtained using quantum measurement theory, not a hidden variable theory.

\section{Violation of Tsirelson's bound}
\label{symtsi}
Here we show that Tsirelson's bound may be violated when $M_{k,\gamma_k}/\p_{k{\rm S}}(\gamma_k)$ is dependent on $\gamma_k$ for both Alice and Bob. Note that this is a slightly stronger condition than violation of \eqref{eq:qure}, because \eqref{eq:qure} can be violated with dependence on just one side.

Define the unitary operators $U_A$ and $U_B$ to be those that diagonalise
$(M_a)^{-1/2} M_A (M_a)^{-1/2}$ and $(M_b)^{-1/2} M_B (M_b)^{-1/2}$,
respectively. We define the new operators
\begin{align}
\tilde M_A &\equiv \frac{U_A(M_a)^{-1/2} M_A (M_a)^{-1/2}U_A\dg}{\max [{\rm eig}(M_a)^{-1/2} M_A (M_a)^{-1/2} ]}, \nn
\tilde M_A^\pm &\equiv \frac{U_A(M_a)^{-1/2} M_A^\pm (M_a)^{-1/2}U_A\dg}{\max [{\rm eig}(M_a)^{-1/2} M_A (M_a)^{-1/2} ]}, \nn
\tilde M_B &\equiv \frac{U_B(M_b)^{-1/2} M_B (M_b)^{-1/2}U_B\dg}{\max [{\rm eig}(M_b)^{-1/2} M_B (M_b)^{-1/2} ]}, \nn
\tilde M_B^\pm &\equiv \frac{U_B(M_b)^{-1/2} M_B^\pm (M_b)^{-1/2}U_B\dg}{\max [{\rm eig}(M_b)^{-1/2} M_B (M_b)^{-1/2} ]}, \nonumber
\end{align}
\begin{align}
\tilde M_a &\equiv U_A(M_a)^{-1/2} M_a (M_a)^{-1/2}U_A\dg = \openone, \nn
\tilde M_a^\pm &\equiv U_A(M_a)^{-1/2} M_a^\pm (M_a)^{-1/2}U_A\dg, \nn
\tilde M_b &\equiv U_B(M_b)^{-1/2} M_b (M_b)^{-1/2}U_B\dg = \openone, \nn
\tilde M_b^\pm &\equiv U_B(M_b)^{-1/2} M_b^\pm (M_b)^{-1/2}U_B\dg.
\end{align}
In addition, the system state $\rho$ is replaced with
\begin{equation}
\tilde \rho \equiv \frac{(U_A\otimes U_B)(M_a \otimes M_b)^{1/2}
\rho(M_a \otimes M_b)^{1/2}(U_A\dg \otimes U_B\dg)}
{\Tr[(M_a \otimes M_b) \rho]}.
\end{equation}
The measurements in terms of these new operators on the new state $\tilde\rho$
must give the same normali{\s}ed $C(\alpha,\beta)$ as the original operators, and
therefore the same value for the CHSH-Bell quantity. In addition, the operators
$\tilde M_A$, $\tilde M_a$, $\tilde M_B$, and $\tilde M_b$ are all diagonal
(and $\tilde M_a=\openone$ and $\tilde M_b=\openone$).

The normali{\s}ations given for $M_A$, $M_A^\pm$, $M_B$ and $M_B^\pm$ ensure that
these operators do not have eigenvalues larger than 1. The unitaries $U_A$ and
$U_B$ can also be taken to be those that sort the eigenvalues in decreasing
order. Now we restrict to a two-dimensional subspace where $M_A$ and $M_B$ are
of the form
\begin{align}
M_A &= \left( \begin{array}{*{20}c}
   {1} & {0}  \\
   {0} & {p_A}  \\
\end{array} \right), \\
M_B &= \left( \begin{array}{*{20}c}
   {1} & {0}  \\
   {0} & {p_B}  \\
\end{array} \right).
\end{align}
Consider taking the state $\tilde\rho$ to be a maximally entangled state
\begin{align}
\tilde\rho &= \ket{\psi}\bra{\psi}, \nn
\ket\psi &= -i(\ket{00}+\ket{11})/2+(\ket{01}+\ket{10})/2.
\end{align}
We can also take the operators $M_\alpha^{\pm}$ such that
\begin{align}
\Delta \tilde M_A &= -M_A^{1/2} Z M_A^{1/2}, \nn
\Delta \tilde M_a &= Y, \nn
\Delta \tilde M_B &= \frac 1{\sqrt 2} M_B^{1/2} (Y-Z) M_B^{1/2}, \nn
\Delta \tilde M_b &= \frac 1{\sqrt 2}(Y+Z).
\end{align}
Here $Y$ and $Z$ are the usual Pauli operators $\sigma_y$ and $\sigma_z$.
Using these operators we find that
\begin{align}
&\frac{\Tr(\Delta \tilde M_A\Delta \tilde M_B\tilde \rho)}{\Tr(\tilde M_A\tilde M_B\tilde \rho)} \nn & \quad = \frac {1-p_A+2(1+p_A)\sqrt{p_B}+(p_A-1)p_B}{\sqrt 2(1+p_A)(1+p_B)} \nn
&\frac{\Tr(\Delta \tilde M_A\Delta \tilde M_b\tilde \rho)}{\Tr(\tilde M_A\tilde M_b\tilde \rho)} = \frac 1{\sqrt 2} \nn
&\frac{\Tr(\Delta \tilde M_a\Delta \tilde M_B\tilde \rho)}{\Tr(\tilde M_a\tilde M_B\tilde \rho)} = \frac 1{\sqrt 2} \nn
&\frac{\Tr(\Delta \tilde M_a\Delta \tilde M_b\tilde \rho)}{\Tr(\tilde M_a\tilde M_b\tilde \rho)} = -\frac 1{\sqrt 2}.
\end{align}
The Bell quantity is then
\begin{align}
{\cal B} &= 2\sqrt 2 + \frac{\sqrt 2 (1-\sqrt{p_B})(\sqrt{p_B}-p_A)}{(1+p_A)(1+p_B)}.
\end{align}
This is greater than Tsirelson's bound provided that $p_A<\sqrt{p_B}<1$.

In the case that $p_B<\sqrt{p_A}<1$, it is possible to obtain violation of
Tsirelson's bound by exchanging the roles of Alice and Bob. That is,
\begin{align}
\Delta \tilde M_A &= \frac 1{\sqrt 2} M_A^{1/2} (Y-Z) M_A^{1/2}, \nn
\Delta \tilde M_a &= \frac 1{\sqrt 2}(Y+Z), \nn
\Delta \tilde M_B &= -M_B^{1/2} Z M_B^{1/2}, \nn
\Delta \tilde M_b &= Y,
\end{align}
so
\begin{align}
{\cal B} &= 2\sqrt 2 + \frac{\sqrt 2 (1-\sqrt{p_A})(\sqrt{p_A}-p_B)}{(1+p_A)(1+p_B)}.
\end{align}

Hence, provided $p_A\ne 1$ and $p_B\ne 1$ (so $M_{k,\gamma_k}/\p_{k{\rm S}}(\gamma_k)$ is dependent on $\gamma_k$ for both Alice and Bob), it is possible to violate Tsirelson's
bound. If one of $p_A$ or $p_B$ is equal to 1, then this approach does not yield
violation of Tsirelson's bound.

\section{Postselection of Ref.\ \cite{cab}}
\label{sec:cab}
Here we show that the approach to postselection in Ref.\ \cite{cab} allows
violation of the CHSH-Bell inequality with unentangled states, if an entangled state is incorrectly assumed. Ref.\ \cite{cab}
considers a three-qubit entangled state, where measurements are made in either
the $X$ or $Z$ basis. Ref.\ \cite{cab} considers the Greenberger-Horne-Zeilinger
(GHZ) state $\ket{\Psi}=(\ket{+++}_y+\ket{---}_y)/\sqrt 2$,
where $\ket{\pm}_y$ are $Y$ eigenstates. The qubits are labelled $i$, $j$ and $k$
according to the following prescription. When $Z$ is measured on each qubit,
either two results are $-1$ and one is $+1$, or all three are $+1$. If two
results are $-1$, the corresponding qubits are taken to be $i$ and $j$. If all
three results are $+1$, the labels $i$, $j$ and $k$ are assigned randomly.

In practice, $Z$ is not always measured on all three qubits, so it is not
possible to apply this method directly. Instead, one could force qubit 3 to be
$k$ by measuring $Z$ on it and postselecting on a measurement result of 1 (as
proposed by Ref.\ \cite{cab}). Alternatively, one could randomly select a qubit, and
force it to be $k$ by this method.

The form of the CHSH-Bell inequality in Ref.\ \cite{cab} is $|\be|\leq2$, with
\begin{equation}
\be_x = C(Z_i,Z_j)-x_kC(Z_i,X_j)-x_kC(X_i,Z_j)-C(X_i,X_j).
\end{equation}
Here subscripts are used to indicate the operator acting on the corresponding
qubit, and $x_k$ is the result which would be obtained by measuring $X_k$. As
$X_k$ is not measured, the value of $x_k$ is inferred from the results
of the measurements on qubits $i$ and $j$. One takes $x_k$ to be
the negative of the product of the measurement results for qubits $i$ and $j$
(if $X$ is measured on one and $Z$ on the other).
Because of this method for selecting $x_k$,
\begin{align}
\label{perfect}
-x_kC(Z_i,X_j)=-x_kC(X_i,Z_j)=1.
\end{align}
For the GHZ state, one also obtains $C(Z_i,Z_j)=1$ and $C(X_i,X_j)=-1$, so
$\be_x=4$.

However, if one uses this method for selecting $i$, $j$ and $k$, as well as for
selecting the value of $x_k$, with an unentangled state, then the inequality
$|\be_x|\le2$ can still be violated. Consider, for example, the state
\begin{align}
\rho &= \frac 14(\ket{--+}_z\bra{--+}+\ket{-+-}_z\bra{-+-}\nn & \quad
+\ket{+--}_z\bra{+--}+\ket{+++}_z\bra{+++}).
\end{align}
It is found that, regardless of which qubit is taken to be $k$, the $Z_i$ and
$Z_j$ measurements are perfectly correlated, so $C(Z_i,Z_j)=1$. There are no
correlations in the $X$ measurements, so $C(X_i,X_j)=0$, and
Eq.\ \eqref{perfect} still holds (because that is due to the method of
determining $x_k$). We therefore find $\be_x=3$, which violates both the Bell
inequality and Tsirelson's bound.

\end{document}